\documentclass[11pt]{article}

\usepackage[margin=1in]{geometry}

\usepackage{amsmath,amssymb,amsthm,ulem}
\usepackage{mathtools}
\usepackage{mathrsfs}
\usepackage{algorithm,algpseudocode}
\usepackage{graphicx}
\usepackage{xcolor}
\usepackage{array}
\usepackage{authblk,booktabs}
\usepackage{times,fourier,charter}
\usepackage[sort&compress]{natbib}
\newtheorem{theorem}{Theorem}[section]
\newtheorem{proposition}[theorem]{Proposition}

\theoremstyle{definition}

\newtheorem{remark}[theorem]{Remark}

\numberwithin{equation}{section}

\makeatletter
\renewcommand\section{\@startsection{section}{1}{\z@}%
  {-2.4ex \@plus -1ex \@minus -.2ex}%
  {1.2ex \@plus .1ex}%
  {\normalfont\bfseries\sffamily}}
\renewcommand\subsection{\@startsection{subsection}{2}{\z@}%
  {-1.8ex \@plus -0.4ex \@minus -.2ex}%
  {1.0ex \@plus .1ex}%
  {\normalfont\small\bfseries\sffamily}}
\renewcommand\subsubsection{\@startsection{subsubsection}{3}{\z@}%
  {-1.0ex \@plus -0.2ex \@minus -.2ex}%
  {0.8ex \@plus .1ex}%
  {\normalfont\normalsize\itshape}}
\renewcommand\paragraph{\@startsection{paragraph}{4}{\z@}%
  {0.6ex \@plus 0.2ex \@minus 0.1ex}%
  {-0.5em}%
  {\normalfont\normalsize\bfseries}}
\makeatother

\renewenvironment{thebibliography}[1]{%
  \begin{theoldbibliography}{#1}%
  \setlength{\itemsep}{2pt}%
  \setlength{\parskip}{2pt}%
}{%
  \end{theoldbibliography}%
}

\advance\textheight -2mm

\newcommand{\be}{\begin{equation}}
\newcommand{\ee}{\end{equation}}
\newcommand{\bse}{\begin{subequations}}
\newcommand{\ese}{\end{subequations}}

\newcommand{\dedication}{
  \begin{center}\itshape This work is dedicated to the memory of Francesco Calogero.
  \end{center}
}
\renewcommand{\boxed}[1]{#1}

\newcommand{\R}{\mathbb{R}}
\newcommand{\Q}{\mathbb{Q}}
\newcommand{\C}{\mathbb{C}}

\newcommand{\Z}{\mathbb{Z}}

\newcommand{\dx}{\partial_x}
\newcommand{\dt}{\partial_t}

\newcommand{\tr}{\operatorname{tr}}

\renewcommand{\thepage}{\arabic{page}}

\makeatletter
\renewcommand{\ps@plain}{%
  \renewcommand{\@oddhead}{}%
  \renewcommand{\@evenhead}{}%
  \renewcommand{\@oddfoot}{\hfil\small\thepage\hfil}%
  \renewcommand{\@evenfoot}{\hfil\small\thepage\hfil}%
}
\makeatother

\usepackage[colorlinks=true,allcolors=blue,pdfencoding=unicode]{hyperref}

\hypersetup{
  pdftitle={Learning Lax Pairs},
  pdfauthor={Jimmie Adriazola, Gino Biondini, Wei Zhu, Panayotis G. Kevrekidis},
  pdfsubject={Integrable systems, Lax pairs, sparse identification},
  pdfkeywords={Lax pairs, integrable systems, SILO, KdV, shallow water, Burgers}
}

\title{\texorpdfstring{Learning Lax Pairs: Revisiting the Classical Paradigm}{Learning Lax Pairs }}

\author[1,2]{Jimmie Adriazola}
\author[3]{Gino Biondini}
\author[4]{Wei Zhu}
\author[5,6,7]{Panayotis~G.~Kevrekidis}

\affil[1]{School of Mathematical and Statistical Sciences, Arizona State University, Tempe, AZ}
\affil[2]{Simon A. Levin Mathematical, Computational and Modeling Sciences Center, Arizona State University, AZ}
\affil[3]{Department of Mathematics, State University of New York at Buffalo, Buffalo, NY}
\affil[4]{School of Mathematics, Georgia Institute of Technology, Atlanta, GA}
\affil[5]{Department of Mathematics and Statistics, University of Massachusetts Amherst, 01003 Amherst, MA}
\affil[6]{Department of Physics, University of Massachusetts Amherst, 01003 Amherst, MA}
\affil[7]{Department of Mechanical Engineering, Seoul National University,
1 Gwanak-ro, Gwanak-gu, Seoul 08826, South Korea}

\date{\small\today}

\begin{document}
\maketitle
\dedication
\begin{abstract}
A Lax pair $(L,P)$ is sometimes thought of as a structural certificate, in that
the spatial operator $L$ carries the spectral data of an integrable system, and
its isospectral evolution under $\partial_t L = [L,P]$ encodes the nonlinear
dynamics. Yet, experience shows that the correspondence between equations and Lax
pairs is much more nuanced than this picture suggests. Equations can admit Lax
pairs that fail to encode the expected integrable structure. This paper probes
that anomalous corner of the Lax pair landscape through five case studies (the
Euler top, the free Schr\"odinger equation, the inviscid Burgers equation, the
shallow water system, and the Korteweg--de Vries equation), each illustrating a
different way the link to integrability can be distorted. The approach combines
analytical calculations with the Sparse Identification of Lax Operators (SILO)
framework, which proved useful throughout, in some cases confirming the textbook
pair and in others surfacing alternatives worth understanding on their own terms.
The recurring lesson across the five cases is that compatibility underdetermines
the Lax representation, so that anomalous pairs are regular features of the
landscape rather than pathologies. Notably, we show that a spectrally degenerate
Korteweg--de Vries Lax pair, classified as fake by standard criteria, still
generates the full conservation hierarchy through its operator algebra, which
shows that a blunt dichotomy between true and fake Lax pairs can be too reductive.
\end{abstract}
\par\smallskip

\section{Introduction}

The standard picture of a Lax pair comes in several forms~\cite{Lax1968,AKNS1974,FaddeevTakhtajan,BabelonBernardTalon2003,Doikou2012}. In the original operator version~\cite{Lax1968}, one seeks operators $L$ and $P$ acting on a fixed Hilbert space and depending on a field $u$, such that
\begin{equation}
\label{eq:Lax}
\dt L = [L, P]
\end{equation}
is equivalent to the evolution equation for $u$. The isospectral flow generated by \eqref{eq:Lax} produces an infinite family of conserved quantities through the spectral data of $L$, and, under favorable circumstances, opens the door to the inverse scattering transform \cite{Gardner1967, AblowitzSegur}. The same compatibility logic applies when $L$ and $P$ are finite-dimensional
matrices (as in the Manakov pair for the Euler top~\cite{Manakov1976}), symbols
on phase space (with the commutator replaced by a Poisson bracket, as in
dispersionless hierarchies~\cite{Zakharov1994}), or connection forms (with the
commutator replaced by a curvature, as in zero-curvature
representations~\cite{ZakharovShabat}). The correspondence between integrable
equations and Lax pairs, in any of these forms, has been a productive organizing
principle in integrable systems theory~\cite{FaddeevTakhtajan, AKNS1974,Dickey2003,DrazinJohnson1989}, though
one with subtleties~\cite{CalogeroNucci}.

What the standard picture leaves implicit is that the same equation can admit different Lax pairs, and the same Lax pair can be compatible with different equations. For example, a single integrable equation typically admits many inequivalent Lax pairs, related by gauge transformations~\cite{FaddeevTakhtajan}, spectral
shifts~\cite{Manakov1976, MishchenkoFomenko1978}, or less structured
redundancies~\cite{CalogeroNucci}. Some of these encode the same spectral content in different coordinates. Others do not, as was already recognized by Calogero and Nucci's ``Lax pairs galore'' \cite{CalogeroNucci}. Calogero and Nucci showed that a pair satisfying \eqref{eq:Lax} need not certify integrability at all, and Sakovich~\cite{Sakovich2001}, and later Butler and Hay \cite{ButlerHay}, gave systematic methods to detect such ``fake'' Lax pairs. Following these previous works, we call a pair \emph{fake} if it
satisfies a Lax compatibility relation, such as~\eqref{eq:Lax}, on solutions of the
equation but its spectral data certify no integrable structure and support no
inverse scattering transform. Additionally and throughout, we use the word \emph{anomalous}  as an umbrella term for any pair that
satisfies a Lax compatibility relation yet departs from the classical spectral
picture, whether by spectral collapse or by hiding an entire parametric family
behind a single base pair. We use it informally and indicate the individual departures as they
arise throughout the paper.

The phenomenon has practical consequences.  Anomalous  pairs have been identified in discrete integrable equations \cite{Gubbiotti2016}, in modified Korteweg--de Vries (KdV) families \cite{axioms13020121}, in the 2D and 3D Euler equations of incompressible fluids \cite{FriedlanderVishik1990, Li2001Euler2D, Childress3DEuler},
and in the broader Calogero-Nucci construction that produces a pair from any conservation law \cite{CalogeroNucci}. Any method that tries to discover a Lax pair~\cite{SILO,
KrippendorfLustSyvaeri2021, LinChen2025LaxPairFIND, deKosterWahls2024} as a
structural object for scientific applications, whether for the nonlinear Fourier
transform developed for fiber-optic communication~\cite{TuritsynPrilepskyEtAl2017,
Kotlyar2020NFTneural, Sedov2025NFTNumerics}, the nonlinear-spectrum analysis of
measured ocean records, which classifies rogue waves~\cite{TeutschBuhrenWaseda2023,
Osborne2020NLFA} and has lately resolved the first soliton-gas sea states observed
in the deep open ocean~\cite{LeeWahls2025}, reduced-order modeling along the lines
of~\cite{GerbeauLombardi2014}, neural surrogates that exploit
Lax structure~\cite{PuChen2024LPNN}, symbolic-computation pipelines that produce
Lax pairs algorithmically~\cite{BridgmanHeremanQuispelvanderKamp2013}, or 
in a principled data-driven fashion~\cite{SILO}
the practitioner has to reckon with the
possibility that the discovered pair is useless.

In parallel, the existence of Lax-like structures that do not conform to the classical spectral picture is, in places, already well developed in the literature. For KdV alone, Miura, Gardner, and Kruskal \cite{Miura1968,MiuraGardnerKruskal1968} showed that the conservation hierarchy follows from a generating-function argument that does not require the Schr\"odinger spectral problem. The Lambert--Musette Bell-polynomial formalism \cite{LambertSpringael2008,LambertLebleSpringael2001,Fan2011,GilsonLambertNimmoWillox1996} recasts bilinearizations, B\"acklund transformations, and Lax pairs of KdV-type equations through Bell polynomials, exposing algebraic content that the operator-theoretic presentation does not make visible. In the pseudopotential and prolongation-structure tradition \cite{WahlquistEstabrook1975}, first-order scalar auxiliary equations generate conservation laws for a range of integrable PDEs without passing through a second-order operator. 

Our work sits alongside these developments and 
revisits the landscape of Lax representations 
via a computational approach using the 
Sparse Identification of Lax Operators (SILO) framework, developed in \cite{SILO} and revisited here.
Concentrating our attention on five selected case studies, we investigate
the algebraic and structural features that distinguish one representation from another, and on what a computational search turns up when it is not constrained to land on a pair that fits a pre-existing template.
The SILO framework formulates the search for Lax pairs as a structured sparse regression problem: a library of candidate operators is combined against the compatibility residual of \eqref{eq:Lax}, in any of its forms, evaluated pointwise in phase or parameter space through the chain rule, with sparsity regularization selecting parsimonious solutions and problem-specific penalties preventing degenerate ones. At the same time, we emphasize that 
SILO is not the only tool of this paper; it is an instrument that paves numerous 
further directions.
For instance, in what follows, SILO may lead to the straightforward verification of a classical pair, it may spearhead the surfacing of genuinely anomalous Lax pair structure 
or it may suggest some Lax pair modification that is subsequently worked out by hand.

The paper is organized as follows.  Section~\ref{sec:euler} treats the Euler top, where an unconstrained sparse search recovers a dynamically correct but spectrally trivial pair whose invariants fail to generate the full Liouville structure, a finite-dimensional instance of the fake Lax pair phenomenon in its cleanest form. A Manakov-type spectral shift \cite{Manakov1976} rescues the situation, and the shift itself is computed by linear algebra rather than further numerics. Section~\ref{sec:schrodinger} studies the free Schr\"odinger equation, where two different nondegeneracy penalties in the SILO loss return two structurally different pairs, the textbook pair~\cite{Fokas2008} and another, spectrally degenerate pair in a precise algebraic sense. Section~\ref{sec:burgers} turns to the inviscid Burgers equation, where a symbolic regression on the Clairaut compatibility identity led us to an infinite-dimensional gauge freedom parametrized by an arbitrary function of one variable, and across spatial orders to a continuum $S_x^m = au + \lambda$ of auxiliary systems valid for every $m \in \mathbb{R}_{>0}$. Section~\ref{sec:shallowwater} treats the shallow water equations on two registers. Within the Laurent layer of dispersionless Lax functions, the classical Brunelli--Das representation~\cite{BrunelliDas1997} is the only compatible pair our sparse search returned. Outside the Laurent layer, the Calogero--Nucci hodograph mechanism applied to an Euler--Poisson--Darboux equation in Riemann coordinates produces an infinite-dimensional family of compatible auxiliary systems, encompassing the polynomial Whitham hierarchy of conserved densities, an algebraic spectral-discriminant continuum, and a dressing transform that links them.

Section~\ref{sec:kdv} focuses on the first-order scalar operator $L = u + \dx$ that has recently been revealed for the KdV equation via SILO in \cite{SILO}. Although, to the best of our knowledge, this pair does not appear in the literature, the existence of such a pair for any scalar equation in conservation form $(u_t=q_x)$ is, in fact, classical. This traces to separate discussions with Manakov and Ablowitz in the 1980s~\cite{KonopelchenkoPC}, transmitted through the Calogero--Nucci construction~\cite{CalogeroNucci}, where the prevailing verdict was that these Lax pairs were useless.  Here we show that, while the spectrum of $L$ is trivial by every classical measure (empty $L^2$ point spectrum, periodic eigenvalues encoding only the mean of $u$, monodromy reducing to the mass), powers of $L$ nevertheless generate a ring of differential polynomials, the complete exponential Bell polynomials in the derivatives of $u$, and the classical KdV conservation laws appear as the kernel of the KdV time derivative on this ring at successive subscript-sum levels. The Bell polynomial connection to KdV-type equations is well developed in the Lambert--Musette--Springael formalism \cite{LambertSpringael2008}, where these polynomials enter through logarithmic linearization in the variable $v = \ln \phi$ with $\phi$ a Schr\"odinger eigenfunction, so that $v_x$ plays the role of a Riccati pseudopotential. Our framing is different, however. Taking $v = \int u\,dx$, the gauge identity $L^s = e^{-v}\partial_x^s e^v$ identifies the Bell polynomials with the zeroth-order parts of $L^s$ acting on the constant function. Through this identification, the Lax equation $\partial_t L = [L,P]$ descends to a $\mathbb{Q}$-linear derivation on the ring of integrated $\rho$-monomials modulo total $x$-derivatives. Its kernel at each subscript-sum level is the KdV conservation hierarchy, extracted by finite-dimensional linear algebra over~$\mathbb{Q}$. 

We suggest keeping in mind the following while reading this work. First, this is not a paper about discovering integrability. Every equation considered here is classically known to be integrable (or, in the Burgers case, linearizable). The object of study is the Lax representation itself: what forms it can take, what it can and cannot encode, and how algebraic and spectral content trade off across different Lax gauges. Second, SILO appears throughout the paper but the method itself is not the central point. Readers interested in SILO as a methodology are directed to \cite{SILO}, where the framework is carefully developed from the ground up. The broader argument this paper makes is that anomalous Lax pairs are regular features of the integrable landscape that any data-driven discovery or machine learning pipeline will inevitably encounter, and whose structure deserves to be understood on its own terms if future searching is to land on pairs that carry real content for scientific applications.

\section{The Euler Top: Fake Lax Pairs in Finite Dimensions}
\label{sec:euler}

The cleanest finite-dimensional illustration of an anomalous Lax pair lives within the Euler top. The system is simple, the classical Lax pair is well known, and yet an unconstrained search for Lax representations returns a pair that satisfies the Lax equation exactly while carrying strictly less information than the classical one. The phenomenon is a genuine feature of the Lax correspondence in the finite-dimensional Lie--Poisson setting, and it admits a classical remedy through a Manakov-type spectral shift \cite{Manakov1976}. We present both the degenerate pair and the remedy here, and in doing so fix the vocabulary of fake, degenerate, and spectrally incomplete that will recur throughout the paper.

\subsection{The system and its classical Lax representation}

The angular velocities $\Omega_i$ of the Euler top satisfy
\begin{equation}\label{eq:EulerTop}
    I_i \dot{\Omega}_i = (I_j - I_k)\, \Omega_j \Omega_k, \qquad (i,j,k)\ \text{cyclic in}\ (1,2,3),
\end{equation}
where $I_i > 0$ are the principal moments of inertia. As a Hamiltonian system evolving on the dual Lie algebra $\mathfrak{so}(3)^*$, the Euler top admits conserved quantities tied directly to its underlying geometric structure. In addition to the kinetic energy, the flow preserves the quadratic Casimir $\|M\|^2 := \sum_{i=1}^3 M_i^2$. The preservation of $\|M\|^2$ reflects the degeneracy of the Lie--Poisson structure on $\mathfrak{so}(3)^*$, which foliates phase space into symplectic leaves given by the level sets of the Casimir. Together, these invariants constrain the motion to the intersection of two quadratic surfaces in $\mathbb{R}^3$, providing a geometric manifestation of integrability and sharply restricting the admissible dynamics~\cite{kozlov1993,bolsinovfomenko2004}.

The Euler top admits a Lax pair~\cite{Manakov1976}. Indeed, Equation~\eqref{eq:EulerTop} is compatible with
\begin{equation}\label{eq:KnownEulPair}
L=\begin{pmatrix}
0 & -M_3 & M_2\\
M_3 & 0 & -M_1\\
-M_2 & M_1 & 0
\end{pmatrix}, \qquad
P=\begin{pmatrix}
0 & -\Omega_3 & \Omega_2\\
\Omega_3 & 0 & -\Omega_1\\
-\Omega_2 & \Omega_1 & 0
\end{pmatrix},
\end{equation}
with $M_i = I_i \Omega_i$, so that the Lax equation $\dot L = [L, P]$ reproduces the dynamics. Although this formulation correctly encodes the equations of motion, it does not capture the full integrable structure of the Euler top.  The Euler top has two functionally independent first integrals: the
Hamiltonian $H = \tfrac12 \sum_{i=1}^3 I_i \Omega_i^2 = \tfrac12 \sum_{i=1}^3 M_i^2 / I_i$
and the Casimir $\|M\|^2 = \sum_{i=1}^3 M_i^2$. On $\mathfrak{so}(3)^* \cong \R^3$
the symplectic leaves are the spheres $\|M\|^2 = \text{const}$, each of dimension
two, so a single integral in involution suffices for Liouville integrability on a
leaf, and $H$ plays that role. 

The pair \eqref{eq:KnownEulPair} reproduces the
equations of motion, but its spectral data carries less information, that is, it recovers the Casimir $\|M\|^2$ but not the
Hamiltonian $H$.
Observe that the matrix $L$ is
skew-symmetric with eigenvalues $0$ and $\pm i\|M\|$, so $\operatorname{tr}(L^2) = -2\|M\|^2$
and every higher trace invariant is determined by $\|M\|^2$. The Hamiltonian,
which weighs the $M_i^2$ by $1/I_i$, is not among them. The pair \eqref{eq:KnownEulPair} is therefore spectrally incomplete. Its Lax
equation $\dot L = [L, P]$ reproduces the equations of motion, so the dynamics
is correct, but the invariants of $L$ recover only $\|M\|^2$, which fixes the
sphere $\|M\|^2 = \mathrm{constant}$ on which the motion lies, and never $H$, the
integral that selects the trajectory within it. Recovering the Casimir but not
the Hamiltonian does not certify Liouville integrability
\cite{reyman1980, bolsinovfomenko2004}.

The classical remedy, due to Manakov \cite{Manakov1976}, introduces a diagonal matrix
\begin{equation}\label{eq:ManakovJ}
J = \tfrac12\operatorname{diag}\bigl(I_2 + I_3 - I_1,\ I_1 + I_3 - I_2,\ I_1 + I_2 - I_3\bigr),
\end{equation}
under which the shifted Lax equation
\[
\frac{d}{dt}(L + \lambda J^2) = \bigl[L + \lambda J^2,\ P + \lambda J\bigr]
\]
holds identically in $\lambda \in \mathbb{C}$. The matrix $\hat L = L + \lambda J^2$
is then isospectral, so every $\operatorname{tr}\hat L^n$ is conserved, and the two
lowest traces already carry both integrals: $\operatorname{tr}\hat L^2$ returns the
Casimir $\|M\|^2$ and $\operatorname{tr}\hat L^3$ returns the Hamiltonian $H$ (the
computation is collected at the end of this subsection). Higher traces add nothing
functionally independent on the three-dimensional top. The spectral parameter is
what lifts the pair \eqref{eq:KnownEulPair} from a dynamical certificate to a
Liouville certificate. This is the first structural lesson we encounter: \emph{a Lax
pair and its spectral completion are different objects}. We also learn that nothing here demands that every pair admit a
spectral completion.

\subsection{Naive searches yield rank-deficient Lax pairs}

For what follows, it is useful to express $\dot L$ through its Hamiltonian chain rule. The Hamiltonian $H = \frac12\sum_{i=1}^3 I_i\Omega_i^2$ generates the equations of motion via the Lie--Poisson bracket, written here in the convention $\dot f=\{H,f\}$, so that $\dot\Omega_i=\{H,\Omega_i\}$, where $\{\cdot,\cdot\}$ denotes the bracket on $\mathfrak{so}(3)^*$ pulled back through $M_i=I_i\Omega_i$. Applying the chain rule and noting that the moments of inertia $I_i$ are constant,
\begin{equation}\label{eq:EulerChain}
\frac{dL}{dt}
= \sum_{i=1}^3 \frac{\partial L}{\partial \Omega_i}\,\dot\Omega_i
= \sum_{i=1}^3 \frac{\partial L}{\partial \Omega_i}\,\{H, \Omega_i\}
= [L, P].
\end{equation}
This valuable identity expresses the Lax equation as a pointwise constraint in phase space. It avoids time discretization entirely and converts the search for Lax pairs into an operator-identification problem evaluated on samples of $\Omega$. Every sparse regression in this paper uses some version of this identity, either directly or in spirit.

Now, suppose we assume nothing about the structure of the Lax pair beyond a linear dependence on the state variables,
\begin{equation}\label{eq:SimpEulHyp}
\tilde L_{i,j} = \sum_k \xi_{i,j,k} \Omega_k, \qquad \tilde P_{i,j} = \sum_k \zeta_{i,j,k} \Omega_k,
\end{equation}
and minimize the natural relative residual
\[
\mathcal{J}_{\rm loss}(\tilde L, \tilde P) = \mathbb{E}_{\Omega \sim \mathcal{P}}\,
\frac{\sum_{j,k} \bigl(\sum_i \frac{\partial \tilde L}{\partial \Omega_i} \{H, \Omega_i\} - [\tilde L, \tilde P]\bigr)_{j,k}^2}{\sum_{j,k} \bigl(\sum_i \frac{\partial \tilde L}{\partial \Omega_i} \{H, \Omega_i\}\bigr)_{j,k}^2}
\]
with a sparsity penalty,
\begin{equation}\label{eq:EulerNatural}
\min_{\eta \in \mathbb{R}^N} \mathcal{J}_{\rm natural}[\eta] := \min_{\eta \in \mathbb{R}^N} \mathcal{J}_{\rm loss}[\eta] + s\,\mathcal{S}[\eta],
\end{equation}
where $\mathcal{P}$ denotes the uniform distribution on the unit cube centered at the origin, $\eta$ collects the $54$ parameters $\xi_{i,j,k}, \zeta_{i,j,k}$, and $\mathcal{S}$ can be any viable sparsity functional, e.g., $\|\eta\|_{{l^1}(\mathbb{R}^{N})}$. The denominator normalization suppresses the trivial solution $\tilde L \equiv 0$. 

What comes back is not the classical pair \eqref{eq:KnownEulPair}. Instead, repeatedly and across random initializations, SILO returns rank deficient Lax pairs. For purposes of this discussion, one such pair is
\begin{equation}\label{eq:degenEulLax}
L_{\rm deficient} = \begin{pmatrix}
0 & 0 & 0 \\
0 & \Omega_1 & \Omega_2 \\
0 & \dfrac{I_2 (I_3 - I_2)}{I_1 (I_3 - I_1)}\,\Omega_2 & -\Omega_1
\end{pmatrix}, \quad
P_{\rm deficient} = \begin{pmatrix}
0 & 0 & 0 \\
0 & 0 & \dfrac{I_3 - I_1}{2 I_2}\,\Omega_3 \\
0 & \dfrac{I_2 - I_3}{2 I_1}\,\Omega_3 & 0
\end{pmatrix}.
\end{equation}
A direct calculation confirms that this pair satisfies $\dot L_{\rm deficient} = [L_{\rm deficient}, P_{\rm deficient}]$ exactly, with only $6$ of $54$ parameters active. From the perspective of $\mathcal{J}_{\rm natural}$, it is the best possible solution: perfect residual with maximal sparsity. From the perspective of integrability, it is useless. The matrix $L_{\rm deficient}$ is rank-two, $\Omega_3$ does not appear in it at all, and its spectral data do not recover either the Casimir or the Hamiltonian.

This is an even worse dilemma than being a \emph{fake Lax pair} in the strict sense of \cite{ButlerHay, Gubbiotti2016}: the compatibility condition is satisfied, yet the integrable structure is not certified \emph{and} the dynamics are not even reproduced. We believe that this phenomenon is really a consequence of a bad search geometry, i.e., a generic necessary-but-not-sufficient objective that is given a generic hypothesis. These generic searches consistently trip over rank-deficient pairs that populate the Lax landscape. Clearly, the route to discovery of more meaningful Lax pairs requires more principled searching. 

\subsection{Recovering the classical structure}

Two structural constraints suffice to rule out \eqref{eq:degenEulLax} and restore the classical pair. First, we enforce skew-symmetry in \eqref{eq:SimpEulHyp}, i.e., $\tilde{L}=-\tilde{L}^\top$ and $\tilde{P}=-\tilde{P}^\top$,
the superscript $\top$ denoting matrix transpose.
In addition to fixing the geometry of the hypothesis, we have the added benefit of reducing the parameter count from $54$ to $18$ while eliminating the degenerate pair~\eqref{eq:degenEulLax} from the hypothesis class. This does not rule out every possible rank deficient Lax pair, therefore, in addition, we add a penalty that forces every variable $\Omega_k$ to remain represented in both operators,
\begin{equation}\label{eq:DegenEulObj}
\mathcal{J}_{\rm full}
=
\prod_{k=1}^{3}
\left\| \frac{\partial \tilde L}{\partial \Omega_k} \right\|_F^2
\left\| \frac{\partial \tilde P}{\partial \Omega_k} \right\|_F^2,
\end{equation}
where $\|A\|_F^2 := \operatorname{tr}(A^\top A)$. The product structure is essential in that if any $\Omega_k$ is dropped from either operator, the corresponding factor vanishes and the reciprocal $\mathcal{J}_{\rm full}^{-1}$ blows up.

The full optimization problem becomes
\begin{equation}\label{eq:EulerFull}
\min_{\eta \in \mathbb{R}^N} \mathcal{J}_{\rm Euler}[\eta]
= \min_{\eta \in \mathbb{R}^N} \delta\,\mathcal{J}^{-1}_{\rm full}[\eta] + (1 - \delta)\,\mathcal{J}_{\rm loss}[\eta] + s\,\mathcal{S}[\eta],
\end{equation}
with $\delta \in (0, 1)$ balancing nondegeneracy against compatibility, $\mathcal{S}[\eta] = \|\eta\|_{\ell^1}$ the sparsity penalty, and $s > 0$ its strength. Empirically, both ingredients are necessary. We emphasize again that without the Frobenius product penalty, numerical results consistently collapse onto fake pairs with the same structural character as \eqref{eq:degenEulLax}, different in the specific variable dropped and coefficients present but identical in their uselessness. Meanwhile, without the skew-symmetry constraint layered on top of the Frobenius penalty, superfluous pairs begin to appear that are not genuinely skew and do not carry the $\mathfrak{so}(3)^*$ structure. Only when both are imposed does SILO reliably recover a pair with the same structure as the classical pair \eqref{eq:KnownEulPair}.

\subsection{From the discovered pair to the Manakov completion}

The Frobenius-regularized SILO search of the previous subsection recovers a skew-symmetric base pair $(L_*, P_*)$ with the structure of \eqref{eq:KnownEulPair}. This pair reproduces the Euler top but, as noted above, fails to generate the full Liouville hierarchy. The point of this subsection is that, given $(L_*, P_*)$, that are skew-symmetric, the spectral completion does not need to be accomplished by further numerics. The completion is the solution of a linear system posed inside the Lie algebra, and once the ansatz is fixed there is no residual numerical search. We make this explicit to mark the division of labor: SILO supplies the full-rank skew symmetric skeleton, and the spectral completion is pure linear algebra in the spirit of Manakov \cite{Manakov1976} and its semisimple generalization by Mishchenko and Fomenko \cite{MishchenkoFomenko1978}.

We review the method here for completeness. Posit the polynomial family linear in $\lambda$,
\begin{equation}\label{eq:lambda_completion}
L(\lambda) = L_* + \lambda L_1, \qquad P(\lambda) = P_* + \lambda P_1,
\end{equation}
with $L_1, P_1$ constant matrices to be determined. Requiring $\dot L(\lambda) = [L(\lambda), P(\lambda)]$ for all $\lambda$ and matching orders gives
\begin{align}
\lambda^0:\quad & \dot L_* = [L_*, P_*], \label{eq:order_zero}\\
\lambda^1:\quad & [L_*, P_1] + [L_1, P_*] = 0, \label{eq:order_one}\\
\lambda^2:\quad & [L_1, P_1] = 0. \label{eq:order_two}
\end{align}
Equation \eqref{eq:order_zero} holds by construction of $(L_*, P_*)$. The remaining two are a linear system in $(L_1, P_1)$, decoupled from $\Omega$ sampling. To extract the $\Omega$-independent content of \eqref{eq:order_one}, expand $L_*$ and $P_*$ in the standard basis $E_k$ of $\mathfrak{so}(3)$:
\[
L_*(\Omega) = \sum_{k=1}^3 I_k \Omega_k E_k, \qquad P_*(\Omega) = \sum_{k=1}^3 \Omega_k E_k.
\]
Substituting into \eqref{eq:order_one} and using linear independence of the $\Omega_k$, the coupled condition decouples into three matrix equations:
\begin{equation}\label{eq:decoupled_euler}
I_k \,[E_k, P_1] = [E_k, L_1], \qquad k = 1, 2, 3.
\end{equation}
Now, restrict \eqref{eq:decoupled_euler} to the diagonal commuting ansatz $L_1 = \operatorname{diag}(l_1, l_2, l_3)$, $P_1 = \operatorname{diag}(p_1, p_2, p_3)$, which automatically satisfies \eqref{eq:order_two}. A direct computation (using that $[E_k, \operatorname{diag}(d_1, d_2, d_3)]$ has a single off-diagonal entry $d_j - d_l$, where $(j, l)$ are the indices other than $k$) reduces \eqref{eq:decoupled_euler} to three scalar relations,
\[
I_k (p_j - p_l) = l_j - l_l, \qquad (k, j, l)\ \text{cyclic}.
\]

The order-$\lambda$ relations $I_k(p_j - p_l) = l_j - l_l$ are the only
constraints. Summing the three cyclic copies cancels the $l$'s and leaves a single
relation on $P_1$, solved by $p_k = \text{const}$ and by $p_k = I_k$, which span
its solution space; hence $p_k = a + b\,I_k$. Back-substituting gives
$l_j - l_l = b\,I_k(I_j - I_l)$, so $l_k = d - b\,I_j I_l = d - b\,I_1 I_2 I_3/I_k$.
The completion is therefore
\[
P_1 = \operatorname{diag}(a + b\,I_k), \qquad
L_1 = \operatorname{diag}\!\Big(d - b\,\tfrac{I_1 I_2 I_3}{I_k}\Big),
\]
with $a$ and $d$ the inert shifts $P_1 \to P_1 + aI$, $L_1 \to L_1 + dI$ and $b$ an
overall scale. Requiring the Hamiltonian in the spectrum fixes only $b \neq 0$: the order-$\lambda$
part of $\operatorname{tr}(L^3)$ is
$3\sum_k l_k(M_k^2 - \|M\|^2) = -6b\,I_1 I_2 I_3\,H$ modulo the Casimir. The
constants $a$ and $d$ are inert identity shifts and $b$ is an overall scale, so on
the three-dimensional top the completion is fixed only up to this gauge, and any
nonzero $b$ already certifies integrability.

The simplest choice makes the point. Take $a = d = 0$ and $b = 1$, so that
\[
P_1 = \operatorname{diag}(I_1, I_2, I_3), \qquad
L_1 = -I_1 I_2 I_3\,\operatorname{diag}(I_1^{-1}, I_2^{-1}, I_3^{-1}).
\]
As discussed, $\operatorname{tr}(L^2)$ returns the Casimir at order $\lambda^0$, and
$\operatorname{tr}(L^3)$ returns $-6 I_1 I_2 I_3\,H$ at order $\lambda^1$, so both
functionally independent integrals are present and the flow is Liouville-certified.
Note that this completion is not the Manakov $J$ given by Equation~\eqref{eq:ManakovJ} resulting from the parameters
$$
a = \tfrac12(I_1 + I_2 + I_3), \qquad b = -1, \qquad
d = \tfrac14(I_1 + I_2 + I_3)^2 - (I_1 I_2 + I_2 I_3 + I_3 I_1).
$$
Nothing internal to
the Euler top singles out $L_1 = J^2$, $P_1 = J$ over it: on $\mathfrak{so}(3)$ that
form is one representative of the gauge family $(a, b, d)$, distinguished only as
the $n = 3$ restriction of the $n$-dimensional rigid-body pair
\cite{Manakov1976, MishchenkoFomenko1978}, where $J$ is the physical mass matrix
and the Manakov shift/scale becomes forced.

\begin{remark} SILO's role is limited to the first step of finding meaningful Lax pairs, discovering
$(L_*, P_*)$ from the compatibility residual; the $\lambda$-graded completion is
then linear algebra over $\mathfrak{so}(3)$.
Extensions of SILO to higher-dimensional Lie-Poisson systems should be explored in future work, particularly within the rigid-body context. Two natural targets are the $n$-dimensional rigid body on $\mathfrak{so}(n)$~\cite{Manakov1976, MishchenkoFomenko1978}, whose spectral completion via the shift-of-argument method on semisimple Lie algebras tests the algebraic step of Section~\ref{sec:euler} at scale, and the Kowalewski top~\cite{BobenkoReymanSemenov1989}, whose Lax representation was found only in 1989 and is a natural stress test for any sparse-regression discovery procedure.
\end{remark}
\section{The Free Schr\"odinger Equation}
\label{sec:schrodinger}

We now move from the matrix setting to a scalar auxiliary formulation for PDEs, using the free Schr\"odinger equation,
\be
\label{e:LinearSchrodinger}
q_t=iq_{xx},
\ee
as a test case. The system is linear and classically solved via Fourier transforms. 
The associated Lax pair, in the Unified Transform Method (UTM) formalism \cite{Fokas1997,Fokas2008}, consists of two first-order equations for an auxiliary field depending on a spectral parameter $k$. The example is simple enough that every step can be done by hand, and yet it already exhibits a phenomenon that will recur: two different nondegeneracy penalties, applied to the same compatibility condition, return two structurally different Lax pairs, only one of which is classical.

\subsection{The Lax pair as a compatibility condition, learning ansatz and algebraic closure}

A Lax pair for \eqref{e:LinearSchrodinger} consists of two first-order equations for an auxiliary field $\mu(x,t;k)$:
\be
\mu_x = f(\mu,q;k),\qquad
\mu_t = g(\mu,q,q_x;k).
\label{e:LSLP}
\ee
This may be viewed as defining the Pfaffian one-form
\[
\omega = d\mu - f\,dx - g\,dt.
\]
Compatibility, $\mu_{xt}=\mu_{tx}$, is equivalent to Frobenius integrability of $\omega=0$. Indeed, along a graph $\mu=\mu(x,t;k)$ we have $d\mu=\mu_x\,dx+\mu_t\,dt$, and inserting this into $\omega=0$ reproduces the pair.  The one-form language is, of course, a compact way to encode the overdetermined first-order system and its compatibility, yet for our search purposes, the key relation is
\[
\partial_t f(\mu,q;k)=\partial_x g(\mu,q,q_x;k).
\]
This type of \emph{Clairaut compatibility} also applies for integrable nonlinear PDEs 
\cite{AblowitzSegur,NMPZ1984}, and will be further exploited in subsequent sections.

Now, the classical Lax pair corresponds to the choice
\be
f(\mu,q;k) = ik\,\mu + q,\qquad 
g(\mu,q,q_x;k) = -ik^2\,\mu + iq_x - kq.
\ee
To pose a search problem for~\eqref{e:LSLP}, we hypothesize an affine form for $f$ and $g$: 
\[
\tilde f = a_1(k)\,q + a_2(k)\,\mu,\qquad 
\tilde g = a_3(k)\,q + a_4(k)\,q_x + a_5(k)\,\mu,
\]
with coefficients at most quadratic in the spectral parameter $k$,
\[
a_j(k)=\sum_{l=0}^2 z_{lj}\,k^l,\qquad z_{lj}\in\mathbb{C},\ k\in\mathbb{R}.
\]
A direct computation of $\tilde f_t$ and $\tilde g_x$ under Schr\"odinger dynamics gives
\[
\tilde f_t = ia_1 q_{xx} + a_2(a_3 q + a_4 q_x + a_5\mu),\qquad
\tilde g_x = a_3 q_x + a_4 q_{xx} + a_5(a_1 q + a_2\mu).
\]
The Clairaut compatibility $\tilde f_t - \tilde g_x = 0$ reads
\[
\tilde f_t - \tilde g_x
= (ia_1-a_4)\,q_{xx}
+ (a_2 a_3 - a_5 a_1)\,q
+ (a_2 a_4 - a_3)\,q_x
+ (a_2 a_5 - a_5 a_2)\,\mu = 0,
\]
where the last term vanishes identically. The remaining coefficients yield three nonlinear algebraic constraints:
\begin{equation}\label{eq:SchrodAEq}
ia_1 - a_4 = 0,\qquad a_2 a_3 - a_5 a_1 = 0,\qquad a_2 a_4 - a_3 = 0,
\end{equation}
or more compactly, $F(\mathbf{a})=\mathbf{0}$. It is clear that the $k$-dependence can be encoded in a matrix,
\[
Z\,\mathbf{k}=\mathbf{a},\qquad
Z=\begin{pmatrix}
z_{01} & z_{02} & z_{03} & z_{04} & z_{05}\\
z_{11} & z_{12} & z_{13} & z_{14} & z_{15}\\
z_{21} & z_{22} & z_{23} & z_{24} & z_{25}
\end{pmatrix}^\top,\qquad
\mathbf{k}=\begin{pmatrix}1\\k\\k^2\end{pmatrix}.
\]
Rather than solve $F(\mathbf{a})=0$ symbolically, we relax the root-finding
problem into a sparse regression over the coefficient vector $z=\operatorname{vec}(Z)$:
\[
\min_{z\in\mathbb{C}^{15}}\;
\frac{1}{2}\,\mathbb{E}_{k\sim\rho}\!\left[\sum_{j=1}^{3}\|F_j(\mathbf{a})\|^2\right]
+ s\,\|z\|_{\ell^1(\mathbb{C}^{15})},
\qquad
\mathbf{a}=Z\mathbf{k},
\]
with $Z\in\mathbb{C}^{5\times 3}$ and $\rho$ uniform on $[-1,1]$. 
The numerics recover the sparsest nontrivial Lax pair
\[
\mu_x=iq,\qquad \mu_t=-q_x,
\]
which is irrelevant for UTM purposes: 
it contains no spectral parameter, and therefore 
it cannot be used to generate a meaningful transform in the UTM sense. 
This is the first nondegeneracy obstacle, and it is resolved by steering away from rank-deficient coefficient matrices.

Before doing so, we note that this type of degeneracy
is general for linear PDE. The pair $\mu_x = iq$, $\mu_t = -q_x$ is the $k=0$ reduction of the
Fokas unified-transform Lax pair \cite{Fokas1997, Fokas2008}, and every scalar
constant-coefficient linear evolution equation has one. For
$q_t + \omega(-i\partial_x)q = 0$ with polynomial dispersion $\omega$ obeying
$\omega(0)=0$, that pair is
\[
\mu_x - ik\mu = q, \qquad
\mu_t + \omega(k)\mu = i\,\frac{\omega(k)-\omega(-i\partial_x)}{k+i\partial_x}\,q,
\]
whose compatibility $\mu_{xt}=\mu_{tx}$ returns the equation for every $k$. At
$k=0$ it collapses to
\[
\mu_x = q, \qquad \mu_t = \partial_x^{-1}q_t = -\partial_x^{-1}\omega(-i\partial_x)\,q,
\]
which carries no spectral parameter and reproduces the equation as the single
conservation law $\mu_{xt}=\mu_{tx}$. The free Schr\"odinger case is
$\omega(k)=ik^2$, for which $\mu_t = \partial_x^{-1}(iq_{xx}) = iq_x$, and the
rescaling $\mu\mapsto i\mu$ returns the degenerate SILO pair $\mu_x = iq$, $\mu_t = -q_x$.

\subsection{Two nondegeneracy penalties and the resulting Lax pairs }

The classical Schr\"odinger Lax pair has coefficient matrix
\[
Z_{\text{classical}} =
\begin{pmatrix}
1 & 0 & 0 & i & 0\\
0 & i & -1 & 0 & 0\\
0 & 0 & 0 & 0 & -i
\end{pmatrix}^\top,
\]
which has full column rank. A direct way to penalize spectral collapse is thus to add a barrier against near-singular $Z^\dagger Z$. Two natural choices present themselves.

The \emph{determinant barrier} penalizes volume collapse of the parallelepiped spanned by the columns of $Z$,
\[
\min_{\mathbf{a}\in\mathcal{A}}\;
\frac{1-r}{2}\mathbb{E}_{k\sim\rho}\!\left[\sum_{j=1}^{3}\|F_j(\mathbf{a})\|^2\right]
+ \frac{r}{2}\,|\mathcal{D}|^{-2}
+ s\,\|z\|_{\ell^1(\mathbb{C}^{15})},
\qquad
\mathcal{D}=\det(Z^\dagger Z).
\]
The \emph{Frobenius pseudoinverse surrogate} penalizes the worst-conditioned singular direction,
\[
\min_{\mathbf{a}\in\mathcal{A}}\;
\frac{1-r}{2}\mathbb{E}_{k\sim\rho}\!\left[\sum_{j=1}^{3}\|F_j(\mathbf{a})\|^2\right]
+ \frac{r}{2}\,\tilde{\mathcal{D}}
+ s\,\|z\|_{\ell^1(\mathbb{C}^{15})},
\qquad
\tilde{\mathcal{D}}=\|Z^+\|_F^2=\operatorname{tr}\!\big((Z^\dagger Z)^{-1}\big).
\]
Geometrically, both penalties act on the Gram matrix $G=Z^\dagger Z$. Full column rank of $Z$ is equivalent to positive definiteness of $G$, or equivalently to all singular values $\sigma_1,\sigma_2,\sigma_3$ of $Z$ being nonzero. The determinant
\[
\det(Z^\dagger Z)=\sigma_1^2\sigma_2^2\sigma_3^2
\]
measures the squared volume of the parallelepiped spanned by the columns of $Z$, so a determinant barrier prevents collapse of this volume and hence penalizes loss of rank. By contrast,
\[
\|Z^+\|_F^2=\operatorname{tr}((Z^\dagger Z)^{-1})=\sigma_1^{-2}+\sigma_2^{-2}+\sigma_3^{-2}
\]
measures inverse conditioning: it blows up whenever any singular direction of $Z$ becomes small. The determinant barrier is naturally interpreted as a volume-collapse penalty, whereas the pseudoinverse penalty is a conditioning penalty that more strongly discourages thin or nearly collapsed directions. We note that both barriers can be weakened by rescaling $z$ to enlarge the singular values of $Z$; however, the $\ell^1$ term counteracts this by penalizing the overall size of the coefficients.

The two penalties select different pairs. The determinant penalty regularly returns the classical five-term Schr\"odinger Lax pair. The Frobenius surrogate, however, returns a different sparse solution, with coefficients of the form
\[
a_2(k)\equiv z_4,\qquad a_5(k)\equiv z_{11},
\]
and
\[
a_1(k)=z_1+z_2 k+z_3 k^2,\qquad
a_3(k)=z_5+z_6 k+z_7 k^2,\qquad
a_4(k)=z_8+z_9 k+z_{10} k^2,
\]
where the eleven complex parameters $z_j$ satisfy \eqref{eq:SchrodAEq}. This second pair is sparse, compatible, and not equivalent to the classical one by any straightforward gauge. It is, however, \emph{spectrally degenerate} in a precise algebraic sense, as a brief manual calculation shows.

For demonstration purposes, set $z_4=1$ and $z_{11}=i$. With this choice, Equations \eqref{eq:SchrodAEq} are satisfied for every $k$ by any function $M=M(k)$ through the relations $a_1(k)=M(k)$ and $a_3(k)=a_4(k)=iM(k)$. The resulting pair is
\begin{equation}\label{eq:SchrodWeakPair}
\mu_x = M(k)\,q + \mu,\qquad 
\mu_t = iM(k)(q+q_x) + i\mu,
\end{equation}
which admits the Schr\"odinger equation $q_t=iq_{xx}$ as its Clairaut compatibility condition for every choice of the function $M(k)$. The arbitrariness of $M(k)$ is structurally revealing: the pair carries no spectral information at all. Different $M(k)$ produce different auxiliary systems, all compatible with the same underlying PDE, with no way to select among them by the compatibility condition alone. In the language of the UTM, this family cannot produce a nontrivial transform kernel; the spectral parameter is present but acts as a free function rather than as the encoding of a scattering problem.

In fact, it is easy to see that pair given by Equation~\eqref{eq:SchrodWeakPair} is fake using the diagnostics of Appendix~\ref{app:butlerhay}. The pair is $g$-fake since
the gauge $\mu\mapsto M(k)^{-1}\mu$ above sends it to the $M(k)\equiv1$ pair,
removing $k$. It is $u$-fake: imposing \eqref{eq:SchrodAEq} on the generalized
coefficients with $a_2=z_4$, $a_5=z_{11}$ gives
\[
a_4=ia_1,\qquad a_3=iz_4\,a_1,\qquad z_{11}=iz_4^2,
\]
so $a_1=M(k)$ is unconstrained and the compatibility system is underdetermined.
Additionally, the Krichever test of Appendix~\ref{app:krichever} agrees: linearizing at
$q=q_0$, the auxiliary system supports the single mode $\mu\propto e^{x+it}$,
independent of $k$, so $\omega_{\mathrm{Lax}}$ is constant while
$\omega_{\mathrm{Schr}}(\xi)=\xi^2$.

The broader lesson of this section is that this is another example where compatibility does not select among Lax representations. Even after enforcing rank-fullness in coefficient space, the Clairaut condition leaves room for analytically unhelpful alternatives. Structural selection, i.e., determining which compatible pair actually encodes scattering, conservation, or inverse-problem data, is a separate question from compatibility itself, and must be addressed through additional constraints that reflect the intended use of the pair.

\section{The Inviscid Burgers Equation and Infinite Parametric Families of Lax Pairs}
\label{sec:burgers}

The Schr\"odinger example exhibited nonuniqueness in the form of a single spectrally degenerate alternative to the classical pair. In this section, we will show that the Hopf equation exhibits a different yet also striking form of nonuniqueness: 
an infinite parametric family of compatible auxiliary systems, indexed by a positive integer, all of which reduce to a base pair under Galilean transformations. The family emerged through a SILO search that was deliberately open about the exponent structure of the compatibility, and once one member was identified numerically, the full family could be verified by hand for every integer $m\geq 1$.

\subsection{A pen-and-paper auxiliary pair and SILO Formulation}

The inviscid Burgers (or Hopf) equation,
\be
u_t+u\,u_x=0,
\label{e:Hopf}
\ee
admits a WKB-type auxiliary formulation in which a scalar phase field $S(x,t)$ plays the role of the WKB phase. 
A pen-and-paper calculation of the dispersionless limit of the Kadomtsev-Petviashvili equation \cite{Zakharov1994} produces, as a special case, the pair
\be
S_x^2=:f(u;\lambda),\qquad
S_t=:g(u,S_x),
\label{e:HopfLP1}
\ee
with 
\be
f(u;\lambda) := -\tfrac{1}{6}u-\lambda,\qquad
g(u,S_x) :=-4S_x^3-u S_x\,,
\label{e:HopfLP2}
\ee
and with spectral parameter $\lambda\in\mathbb{C}$. This is a dispersionless Lax-type pair 
in the sense that compatibility of~\eqref{e:HopfLP1} reproduces the dynamics of $u$ through a Clairaut-type condition.

Before posing the search problem, it is useful to derive the compatibility conditions in a form that makes clear what we are enforcing. Write $s:=S_x$ and treat $f=f(u)$ and $g=g(u,s)$. Differentiating $f(u)=s^2$ in $x$ gives $2s\,s_x=f_u(u)\,u_x$, implying 
\[
s_x=\frac{f_u(u)}{2s}\,u_x.
\]
Differentiating $f(u)=s^2$ in $t$ and using \eqref{e:Hopf} yields
$2s\,s_t=f_u(u)\,u_t=-f_u(u)\,u\,u_x$, implying 
\[
s_t=-\frac{f_u(u)}{2s}\,u\,u_x.
\]
On the other hand, differentiating the second of \eqref{e:HopfLP1} in $x$ gives
\[
s_t=S_{xt}=(S_t)_x=g_u(u,s)\,u_x+g_s(u,s)\,s_x.
\]
Substituting the expression for $s_x$ and imposing $S_{xt}=S_{tx}$ yields, after factoring out $u_x$,
\[
g_u(u,s)+\frac{f_u(u)}{2s}\,g_s(u,s)=-\frac{f_u(u)}{2s}\,u,
\]
or, equivalently the compatibility identity
\begin{equation}\label{eq:BurgersClairaut}
2s\,g_u+f_u\bigl(g_s+u\bigr)=0.
\end{equation}
A direct substitution verifies that the pen-and-paper pair \eqref{e:HopfLP1}--\eqref{e:HopfLP2} satisfies \eqref{eq:BurgersClairaut} identically. This is the constraint that the regression enforces.

Because \eqref{eq:BurgersClairaut} involves only $u$ and $s$, we hypothesize $f$ and $g$ as polynomials in those variables,
\begin{equation}\label{eq:BurgersAnsatz}
\tilde f(u)=\sum_{j=0}^{N_1}\xi_j\,u^j,\qquad
\tilde g(s,u)=\sum_{j=0}^{N_2}\sum_{k=0}^{N_3}\zeta_{j,k}\,s^j u^k,
\end{equation}
and collect the unknowns into $\eta = (\Xi, \mathcal{Z}) \in \R^{N}$ with $N = (N_1+1) + (N_2+1)(N_3+1)$. Substituting \eqref{eq:BurgersAnsatz} into \eqref{eq:BurgersClairaut} produces a residual that is polynomial in $(u, s)$ and quadratic in $\eta$. SILO in this setting thus takes the form
\begin{equation}\label{eq:BurgersSILO}
\min_{\eta \in \R^N}\;
\mathbb{E}_{(u,s)\sim\nu}\!\left[
\frac{\bigl(2s\,\partial_u\tilde g + \partial_u\tilde f\,(\partial_s\tilde g + u)\bigr)^{2}}{\|\Xi\|_\infty\,\|\mathcal{Z}\|_\infty}
\right]
+ s\,\|\eta\|_{\ell^1},
\end{equation}
with $\nu$ uniform on $[-1, 1]^2$. The $\ell^\infty$ product normalization in the denominator plays the role of the Frobenius product penalty used in Section~\ref{sec:euler}: it diverges whenever either $\Xi$ or $\mathcal{Z}$ collapses to zero, and so suppresses the trivial solutions $\tilde f \equiv 0$ and $\tilde g \equiv 0$. The $\ell^1$ term once again promotes parsimony.

\subsection{Three numerically discovered families}
\label{sec:burgers-three-families}

Three sparse families emerge from the numerics. The first is the pen-and-paper pair. The second, discovered by SILO, may be written (after rescaling) as
\be
S_x^2 = a u + \lambda, \qquad
S_t = b u - u S_x - \frac{b}{a} S_x^2 + \frac{2}{3a} S_x^3,
\ee
with $a \neq 0$ and free constants $(a, b, \lambda)$. A third, higher-degree family also appears:
\be
S_x^2 = a u + \lambda, \qquad
S_t = c_1 S_x^4 + c_2 S_x^3 + c_3 S_x^2 + c_4 u S_x^2 + c_5 u S_x + c_6 u + c_7 u^2,
\label{e:HopfLP3}
\ee
with the $c_j$ satisfying linear relations derived below.

It is clear that even higher-degree families would emerge from the numerics if we include more terms in the polynomial library.  Rather than verify each family independently or proceed by induction, we seek to solve the compatibility PDE~\eqref{eq:BurgersClairaut} in closed form. 
Doing so reveals that the pattern is more general than the two polynomial samples above suggest: there is an infinite-dimensional family of admissible temporal parts $S_t$, indexed by an arbitrary function of one variable, every element of which collapses to the base pair on the constraint surface. The second and third numerical families are the linear and quadratic polynomial samples, respectively.

\paragraph{The compatibility PDE and its general solution.}
Fix the spatial ansatz $S_x^2 = a u + \lambda$ with $a \neq 0$, and, again, write $s := S_x$ for brevity. With $f(u) = a u + \lambda$, the Clairaut identity \eqref{eq:BurgersClairaut} becomes the linear first-order PDE
\begin{equation}\label{eq:BurgersCompatPDE}
a\, g_s + 2 s\, g_u = -a u
\end{equation}
for the unknown $g(s, u) = S_t$.  Solving~\eqref{eq:BurgersCompatPDE} is thus a routine application of the method of characteristics with  characteristic curves that satisfy $ds/a = du/(2s)$, or equivalently,
\(
d(s^2 - a u) = 0
\),
and therefore coincide with the level sets of the spatial constraint: the characteristic variable is $\xi := s^2 - a u$, and the surface $\xi = \lambda$ is the one on which the auxiliary pair is evaluated.

\begin{proposition}\label{prop:BurgersInfiniteFamily}
The general $C^1$ solution of \eqref{eq:BurgersCompatPDE} is
\begin{equation}\label{eq:BurgersGenSol}
g(s, u) = \frac{2}{3a}\, s^3 - u s + \Phi(s^2 - a u),
\end{equation}
where $\Phi\colon \mathbb{R} \to \mathbb{R}$ is arbitrary. 
Every Lax pair for the inviscid Burgers equation with spatial ansatz $S_x^2 = a u + \lambda$ has temporal part of this form.
Moreover, on the constraint surface $s^2 = a u + \lambda$, every element of the family \eqref{eq:BurgersGenSol} reduces to
\vspace*{-0.5ex}
\begin{equation}
\label{e:S_translate}
S_t = -\frac{S_x^3}{3a} + \frac{\lambda}{a}\, S_x + \Phi(\lambda),
\end{equation}
a Galilean translate of the base pair with additive constant $\mu := \Phi(\lambda)$.
\end{proposition}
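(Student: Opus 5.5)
The plan is to treat \eqref{eq:BurgersCompatPDE} as a linear inhomogeneous first-order PDE on the whole $(s,u)$-plane. I will split $g = g_p + h$, where $g_p$ is an explicit particular solution and $h$ solves the homogeneous equation $a\,h_s + 2s\,h_u = 0$. Then I will show that every $C^1$ homogeneous solution is a function of the characteristic variable $\xi = s^2 - a u$ alone. The last claim, the reduction \eqref{e:S_translate}, follows by substituting the constraint into \eqref{eq:BurgersGenSol}.

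\textbf{Particular solution.} I take the polynomial ansatz $g_p = \frac{2}{3a}s^3 - us$ suggested by the second numerical family, and verify it directly. We have $a\,\partial_s g_p = 2s^2 - au$ and $2s\,\partial_u g_p = -2s^2$, which sum to $-au$. So $g_p$ solves \eqref{eq:BurgersCompatPDE}, and the difference $h = g - g_p$ of any two solutions solves the homogeneous equation.

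\textbf{Homogeneous equation (the main step).} Rather than argue along characteristics curve by curve, I will use the global change of variables $(s,u)\mapsto(s,\xi)$ with $\xi = s^2 - au$. Since $a\neq 0$, this is a $C^\infty$ diffeomorphism of $\mathbb{R}^2$ with inverse $u = (s^2-\xi)/a$. Writing $h(s,u) = H(s,\xi)$, the chain rule gives $h_s = H_s + 2sH_\xi$ and $h_u = -aH_\xi$. Hence $a h_s + 2s h_u = aH_s$, and the homogeneous equation becomes $H_s = 0$ on all of $\mathbb{R}^2$. Each line $\{\xi = \text{const}\}$ in the new coordinates is all of $\mathbb{R}$ in $s$, hence connected, so $H(s,\xi) = H(0,\xi) =: \Phi(\xi)$. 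Therefore $h = \Phi(s^2 - au)$, with $\Phi(\xi) = h(0,-\xi/a)$, which is $C^1$. Conversely, any $C^1$ function $\Phi$ gives a solution by the same computation. This proves \eqref{eq:BurgersGenSol} and its generality.

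\textbf{Every Lax pair has this form.} Here I appeal to the derivation of \eqref{eq:BurgersClairaut}: for $g = g(u,s)$, the condition $S_{xt} = S_{tx}$ along solutions of \eqref{e:Hopf} is equivalent to \eqref{eq:BurgersClairaut} after factoring out $u_x$. With $f = au+\lambda$, this is exactly \eqref{eq:BurgersCompatPDE}, which must hold identically because $u_x$ is arbitrary. This is the point I expect to need the most care in stating. If one demands compatibility only on the constraint surface $s^2 = au+\lambda$, then $g$ is determined only there, but \eqref{e:S_translate} shows that all admissible values coincide with those of the family anyway. So I will phrase the claim as: $g$ is a function of $(u,s)$ satisfying the Clairaut identity identically, as in the regression ansatz \eqref{eq:BurgersAnsatz}.

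\textbf{Reduction on the constraint surface.} Substituting $u = (s^2-\lambda)/a$ into \eqref{eq:BurgersGenSol} gives
\[
\frac{2}{3a}s^3 - \frac{s^3 - \lambda s}{a} + \Phi(\lambda) = -\frac{s^3}{3a} + \frac{\lambda}{a}s + \Phi(\lambda),
\]
which is \eqref{e:S_translate}. Finally, the additive constant $\mu = \Phi(\lambda)$ is removed by the shift $S \mapsto S - \mu t$. This leaves $S_x$, and hence the spatial equation, unchanged, which justifies calling \eqref{e:S_translate} a translate of the base pair.
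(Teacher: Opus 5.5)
Your proposal is correct and follows the paper's proof: the same particular solution $\frac{2}{3a}s^3 - us$, the homogeneous part written as an arbitrary function of $\xi = s^2 - au$, and the same substitution $u = (s^2-\lambda)/a$ on the constraint surface. Your global change of variables $(s,u)\mapsto(s,\xi)$ simply makes rigorous the paper's one-line appeal to $\xi$ being constant along characteristics. Your discussion of the ``every Lax pair'' clause spells out a point that the paper's proof leaves implicit.
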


\begin{proof}
The function $g_0(s, u) := {2s^3}/({3a})  - u s$ is a particular solution 
of \eqref{eq:BurgersCompatPDE}, since $(g_0)_s = 2 s^2 / a - u$ and $(g_0)_u = -s$ imply 
\[
a\, (g_0)_s + 2 s\, (g_0)_u = 2 s^2 - a u - 2 s^2 = -a u.
\]
The associated homogeneous equation $a h_s + 2 s h_u = 0$ has general solution $h = \Phi(\xi)$ with $\xi = s^2 - a u$, since $\xi$ is constant along characteristics. Summing gives \eqref{eq:BurgersGenSol}.
Equation~\eqref{e:S_translate} then follows by
simply substituting $u = (s^2 - \lambda)/a$ into \eqref{eq:BurgersGenSol}, which yields
\[
g = \frac{2}{3a} s^3 - \frac{s^2 - \lambda}{a}\, s + \Phi(\lambda) = -\frac{s^3}{3a} + \frac{\lambda s}{a} + \Phi(\lambda). 
\qedhere
\]
\end{proof}

\paragraph{The two numerical families as polynomial samples.}
The second and third families correspond to linear and quadratic choices of $\Phi$, respectively. Taking $\Phi(\xi) = -b \xi / a$ in \eqref{eq:BurgersGenSol} gives
\[
g(s, u) = \frac{2}{3a}\, s^3 - u s - \frac{b}{a}(s^2 - a u) = \frac{2}{3a}\, S_x^3 - u S_x - \frac{b}{a}\, S_x^2 + b u,
\]
the second family. Taking $\Phi(\xi) = \alpha_0 + \alpha_1 \xi + \alpha_2 \xi^2$ and expanding yields
\[
g(s, u) = \frac{2}{3a}\, s^3 - u s + \alpha_0 + \alpha_1 (s^2 - a u) + \alpha_2 (s^2 - a u)^2.
\]
Matching monomial coefficients against the third family's basis $\{s^4, s^3, s^2, u s^2, u s, u, u^2, 1\}$ gives
\[
c_1 = \alpha_2,\quad c_2 = \tfrac{2}{3a},\quad c_3 = \alpha_1,\quad c_4 = -2 a \alpha_2,\quad c_5 = -1,\quad c_6 = -a \alpha_1,\quad c_7 = a^2 \alpha_2,
\]
with additive constant $\alpha_0$. The seven $c_j$ are thus determined by three free parameters $(\alpha_0, \alpha_1, \alpha_2)$, with $c_2$ and $c_5$ pinned by compatibility and $(c_1, c_4, c_7)$, $(c_3, c_6)$ related through $\alpha_2, \alpha_1$ respectively. The relations among the $c_j$ reported by the numerics are exactly these.

\paragraph{The infinite family.}
Proposition~\ref{prop:BurgersInfiniteFamily} says that the temporal part $S_t$ is
determined only up to an additive homogeneous solution $\Phi(\xi)$, $\xi = s^2 - au$,
a function constant along the characteristics of the compatibility PDE. As
solutions of that PDE on the full $(s,u)$ plane these are genuinely distinct, and
the solution space is infinite-dimensional. As Lax pairs they are not, that is, the
auxiliary system lives only on the constraint surface $\xi = \lambda$, where
$\Phi(\xi)$ collapses to the single constant $\Phi(\lambda)$, and that constant is
the potential gauge $S \mapsto S + \Phi(\lambda)\,t$, which fixes $S_x$ and the
recovered equation. Every member therefore reduces on the surface to the base pair
up to this gauge, as the Corollary records. Polynomial choices of $\Phi$ of degree
$N$ give $(s,u)$-polynomial $S_t$ of bidegree up to $(2N, N)$, and the sequence of
``new'' families SILO returns for $N = 2, 3, \ldots$ is the artifact of evaluating
$\Phi$ before restricting to $\xi = \lambda$. There is no ceiling to $N$, and no new
Lax pair past the base one.

\subsection{An infinite family of pairs}

 In light of
the above results, we applied the same approach assuming $S_x^3=f(u;\lambda)$. Once again SILO returned valid Lax pairs, but this time with a noticeably different algebraic structure. 
Pushing further, while investigating hypotheses of the form $S_x^m=f(u;\lambda)$, SILO discovered the following pair,
\begin{equation}\label{eq:BurgersFamily}
S_x^m=a\,u+\lambda,\qquad
S_t=-\frac{1}{a(m+1)}S_x^{m+1}+\frac{\lambda}{a}S_x+\mu,
\end{equation}
which is valid for every integer $m>0$, with $a\neq 0$ and arbitrary constants $(\lambda,\mu)$.

Showing this by hand for general $m$ is not difficult. Set $s:=S_x$ and assume
$s^m=a\,u+\lambda$, so that $u= (s^m-\lambda)/{a}$.
Differentiating in $x$ yields
\(m s^{m-1}s_x=a\,u_x\), i.e., 
\(u_x=\frac{m}{a}s^{m-1}s_x\).
Differentiating in $t$ and using inviscid Burgers yields instead
\[
m s^{m-1}s_t=a\,u_t=-a\,u\,u_x=-a\Big(\frac{s^m-\lambda}{a}\Big)\Big(\frac{m}{a}s^{m-1}s_x\Big).
\]
Canceling the common factor $m s^{m-1}$,
\[
s_t=-\frac{1}{a}(s^m-\lambda)s_x=\frac{1}{a}(\lambda-s^m)s_x.
\]
If we now seek $S_t=P(s)$, then $s_t=S_{xt}=(S_t)_x=P'(s)s_x$, 
so that 
$P'(s)=(\lambda-s^m)/a$ implies 
\[
P(s)=\frac{1}{a}\Big(\lambda s-\frac{s^{m+1}}{m+1}\Big)+\mu,
\]
which is exactly \eqref{eq:BurgersFamily}. The $m=2$ case coincides with the pair reported earlier in this section.

To interpret this result it is useful to recall Calogero and Nucci's key observation~\cite{CalogeroNucci} that any PDE with a local conservation law admits a Lax pair built by hodograph dressing of a constant-coefficient linear pair. We can therefore ask whether the inviscid Burgers family \eqref{eq:BurgersFamily} could also be derived from a conservation-law hodograph in the dispersionless WKB setting. The answer is that indeed it can, as we now show.

The starting observation is that inviscid Burgers admits a fractional-power conservation hierarchy. For any real $m > 0$,
\[
\partial_t u^{1/m} = \frac{1}{m} u^{1/m-1} u_t = -\frac{1}{m} u^{1/m} u_x = -\frac{1}{m+1}\partial_x u^{(m+1)/m},
\]
so $\partial_t u^{1/m} + \frac{1}{m+1}\partial_x u^{(m+1)/m} = 0$ is a conservation law of inviscid Burgers for every $m > 0$. Allowing a Galilean shift, set $v := au + \lambda$ for constants $a \neq 0$ and $\lambda$.  The inviscid Burgers in $u$ becomes $v_t = -(v - \lambda) v_x / a$ in $v$. A direct computation then gives
\[
\partial_t v^{1/m} = \frac{1}{m} v^{1/m - 1} v_t = -\frac{(v - \lambda) v_x}{m a} v^{1/m - 1} = \frac{\lambda v^{1/m - 1} v_x}{m a} - \frac{v^{1/m} v_x}{m a},
\]
and each term on the right is a total $x$-derivative:
\[
-\frac{v^{1/m} v_x}{m a} = -\frac{1}{a(m+1)} \partial_x v^{(m+1)/m}, \qquad \frac{\lambda v^{1/m - 1} v_x}{m a} = \frac{\lambda}{a} \partial_x v^{1/m}.
\]
Combining,
\[
\partial_t v^{1/m} = \partial_x\!\left[\frac{\lambda}{a} v^{1/m}-\frac{1}{a(m+1)} v^{(m+1)/m} \right],
\]
which is a one-parameter family of conservation laws indexed by $m>0$, 
with densities
$f_m$ and fluxes $g_m$ given by
\[
f_m:= (au + \lambda)^{1/m},\qquad 
g_m := \frac{\lambda}{a}(au + \lambda)^{1/m}-\frac{1}{a(m+1)}(au + \lambda)^{(m+1)/m} .
\]

With these conservation laws in hand, we now apply the hodograph construction. Recall that a local conservation law $\partial_t f = \partial_x g$ is the closedness condition for the one-form $\omega = f\,dx + g\,dt$, so on a simply connected solution domain there exists a scalar potential $S(x,t)$ with $\omega = dS$, giving $S_x = f$ and $S_t = g$. The conservation law of the underlying PDE is thereby promoted to a first-order auxiliary system in $S$. Apply this with $f = f_m$ and $g = g_m$ from above by setting
\[
S_x = f_m = (au+\lambda)^{1/m}, \qquad S_t = g_m + \mu,
\]
where the free parameter $\mu$ is the additive potential-gauge freedom $S \mapsto S + \mu t$, which leaves $S_x$ invariant and shifts $S_t$. The mixed-partial consistency $S_{xt} = S_{tx}$ reads $\partial_t f_m = \partial_x g_m$, exactly the conservation law derived in the previous paragraph, so $S$ exists along solutions of inviscid Burgers. 

Lastly, to recover \eqref{eq:BurgersFamily}, simply raise $S_x$ to the $m$-th power, and we have immediately the spatial part $S_x^m = au + \lambda$. For the temporal part, note that $S_x^{m+1} = (au+\lambda)^{(m+1)/m}$, so each term in $g_m$ rewrites cleanly as a power of $S_x$:
\[
g_m = \frac{\lambda}{a}\,S_x-\frac{1}{a(m+1)}\,S_x^{m+1} .
\]
Substituting into $S_t = g_m + \mu$ gives the temporal equation of \eqref{eq:BurgersFamily}.

The family that SILO discovered is therefore a dispersionless first-order incarnation of the Calogero--Nucci mechanism, with the integer $m>0$ indexing the fractional-power root of the underlying conservation law and the constants $(a, \lambda, \mu)$ playing the roles of scaling, Galilean shift, and gauge. The integer indexing is itself an artifact of our polynomial library. Indeed, the derivation above goes through for any real $m > 0$, since the conservation law $\partial_t v^{1/m} = \partial_x[\,\cdots\,]$ requires nothing of $m$ beyond $m \neq 0$, and the spatial relation $S_x^m = au + \lambda$ is a well-defined algebraic constraint on any branch where $au + \lambda > 0$. Inviscid Burgers therefore admits a genuine continuum of compatible dispersionless auxiliary systems indexed by $m \in \mathbb{R}_{>0}$. This is the inviscid Burgers incarnation of ``Lax pairs galore'' in its strongest form, with uncountably many compatibility certificates.

\section{Shallow Water Equations and Dispersionless Lax Functions}
\label{sec:shallowwater}

Shallow water extends the inviscid Burgers analysis of section~\ref{sec:burgers} to two field components, and the Lax landscape splits into two registers. Within the Laurent layer of dispersionless Lax functions in a spectral parameter $p$, the classical Brunelli--Das pair~\cite{BrunelliDas1997} is the only object our sparse search returns across the Laurent 
and polynomial degrees we explored. Outside the Laurent layer, the Calogero--Nucci hodograph mechanism produces an infinite-dimensional family of compatible auxiliary systems, parametrized by solutions of an Euler--Poisson--Darboux equation in the Riemann invariants. The Brunelli--Das pair sits inside this continuum as the algebraic family generated by the spectral discriminant.

\subsection{The Brunelli--Das dispersionless Lax representation}
\label{sec:sw-bd}

The shallow water system (also known as the Saint-Venant system, which is equivalent to the polytropic gas equations with a pressure term $P = \pm\frac12 a \rho^\gamma$ with $\gamma=2$) reads 
\begin{equation}\label{eq:SW-pde}
\rho_t + (u\rho)_x = 0, \qquad u_t + u u_x + a\rho_x = 0, \qquad a = \pm 1,
\end{equation}
and arises as the dispersionless (semiclassical) limit of focusing and
defocusing NLS equations in its Madelung (i.e., density and phase gradient/velocity) representation~\cite{JinLevermoreMcLaughlin1999}. 
Brunelli and Das~\cite{BrunelliDas1997} gave a Lax description of the polytropic gas
hierarchy;
here we limit our considerations to the case $\gamma=2$. 
Their result is the dispersionless Lax function
\begin{equation}\label{eq:BD-L}
L(p; u, \rho) = p + u + \frac{a\rho}{p},
\end{equation}
together with the polynomial generator
\begin{equation}\label{eq:BD-B}
B(p; u) = \tfrac{1}{2} p^2 + u p,
\end{equation}
related by the standard hierarchy projection
\begin{equation}\label{eq:BD-projection}
B = \bigl( \tfrac{1}{2} L^2 \bigr)_{>0},
\end{equation}
where $(\,\cdot\,)_{>0}$ keeps strictly positive powers of $p$. The
dispersionless Lax equation
\begin{equation}\label{eq:BD-Lax}
L_t = \{L, B\}, \qquad \{L, B\} := L_p B_x - L_x B_p,
\end{equation}
in which $p$ is an independent spectral variable, so $L_t$, $L_x$, $B_x$ act through $u(x,t),\rho(x,t)$ at fixed $p$ and $L_p$, $B_p$ are the explicit $p$-derivatives, reproduces \eqref{eq:SW-pde} after
matching coefficients of $p^0$ and $p^{-1}$. We refer the reader to
\cite{BrunelliDas1997, Brunelli2000, ConstandacheDasToppan2002} for the
algebraic structure (recursion relations, biHamiltonian content, and
generating functions for the conserved charges), and to \cite{KonopelchenkoMartinezMedina2010} for a treatment of dispersionless coupled-KdV hodographs in which the Euler--Poisson--Darboux equation that we encounter below appears as the
compatibility datum for hierarchy critical points.

Similarly as in Section~\ref{sec:burgers}, the Lax function \eqref{eq:BD-L} admits an equivalent phase-function form: 
 introducing a phase $S(x, t; \lambda)$ such that $p := S_x$, the spectral curve
$L = \lambda$ becomes the quadratic relation
\begin{equation}\label{eq:SW-quad}
p^2 + (u - \lambda) p + a\rho = 0,
\end{equation}
and \eqref{eq:BD-Lax} is equivalent, on the level set $L = \lambda$, to the
Clairaut compatibility $S_{xt} = S_{tx}$ for $S_x = p$ and $S_t = -B$. We
will return to this form in Section~\ref{sec:sw-cornucopia} with more detail when
we generate alternative pairs by the hodograph mechanism.

\subsection{SILO formulation and recovery of the known pair}
\label{sec:sw-silo}

The hierarchy projection \eqref{eq:BD-projection} is the structural feature
that makes a SILO formulation fairly straightforward: the
time generator $B$ is not an independent unknown, and the inverse problem
reduces to learning a single object $L$. 
We therefore posit for $L$ 
the truncated Laurent expansion 
\begin{equation}\label{eq:SW-ansatz}
\widetilde L(p; u, \rho) = \sum_{m = -M_-}^{M_+} c_m(u, \rho)\, p^m, \qquad
c_m(u, \rho) = \sum_{j=0}^{N_u} \sum_{k=0}^{N_\rho} \theta^{(m)}_{j,k}\, u^j \rho^k,
\end{equation}
with normalization $c_1 \equiv 1$ to fix the spectral scaling. 
To derive the regression residuals, we expand $L_t = L_u u_t + L_\rho \rho_t$ and
substitute the dynamical equations~\eqref{eq:SW-pde}, obtaining 
\[
L_t = \bigl(-u L_u - \rho L_\rho\bigr) u_x + \bigl(-a L_u - u L_\rho\bigr) \rho_x.
\]
For the Poisson bracket $\{L, B\} = L_p B_x - L_x B_p$, the spatial derivatives $L_x$ and $B_x$ at fixed $p$ expand via the chain rule as
\[
L_x = L_u u_x + L_\rho \rho_x, \qquad B_x = B_u u_x + B_\rho \rho_x.
\]
Substituting and grouping by $u_x$ and $\rho_x$,
\begin{align*}
\{L, B\} = L_p B_x - L_x B_p = L_p (B_u u_x + B_\rho \rho_x) - (L_u u_x + L_\rho \rho_x) B_p 
= (L_p B_u - L_u B_p) u_x + (L_p B_\rho - L_\rho B_p) \rho_x.
\end{align*}
Equating coefficients gives two pointwise residuals
\bse
\begin{align}
\mathcal{R}_1(\widetilde L) &= -u \widetilde L_u - \rho \widetilde L_\rho - \bigl(\widetilde L_p \widetilde B_u - \widetilde L_u \widetilde B_p\bigr), \label{eq:SW-R1}\\
\mathcal{R}_2(\widetilde L) &= -a \widetilde L_u - u \widetilde L_\rho - \bigl(\widetilde L_p \widetilde B_\rho - \widetilde L_\rho \widetilde B_p\bigr), \label{eq:SW-R2}
\end{align}
\ese
with $\widetilde B = (\tfrac{1}{2} \widetilde L^2)_{>0}$, leading to the SILO problem
\begin{equation}\label{eq:SW-SILO}
\min_{\theta \in \Theta}\; (1-r)\, \mathbb{E}_{(u, \rho, p) \sim \nu}\Bigl[|\mathcal R_1|^2 + |\mathcal R_2|^2\Bigr] + r\, \mathcal N(\widetilde L) + 2s\, \|\theta\|_{\ell^1},
\end{equation}
where $\nu$ is supported on a bounded set with $\rho > 0$ and
$|p| \geq p_{\min} > 0$. The penalty
\begin{equation}\label{eq:SW-N}
\mathcal N(\widetilde L) = \frac{1}{\mathbb{E}_{(u,\rho,p)\sim\nu}\!\bigl[\widetilde L_u^2\,\widetilde L_\rho^2\bigr]}
\end{equation}
is the shallow-water analog of the Frobenius product penalty
\eqref{eq:DegenEulObj} used for the Euler top: it diverges whenever
$\widetilde L_u$ or $\widetilde L_\rho$ vanishes identically on the support
of $\nu$, forcing both fields to remain represented in $\widetilde L$.
Without it, the optimizer collapses onto $\widetilde L = p$, which
satisfies \eqref{eq:SW-R1}--\eqref{eq:SW-R2} identically but carries no
field content.

In the experiment we take $M_- = 2$, $M_+ = 1$, and $N_u = N_\rho = 2$, so the
classical three-term ansatz is not built in a priori. Across random
initializations the optimizer returns the Brunelli--Das pair \eqref{eq:BD-L}
and its constant-shift gauges $L \mapsto L + c$, $B \mapsto B + c\, p$,
and reports no other compatible solution within the ansatz.
Within the Laurent class, the known Lax function is the only object SILO
could find, even after enlarging the search window to $M_-, M_+ \leq 5$
and the polynomial bidegrees to $N_u, N_\rho \leq 3$. Outside the Laurent class, however, the picture is entirely
different, as we show next.

\subsection{A Calogero--Nucci cornucopia of pairs via the Euler--Poisson--Darboux equation}
\label{sec:sw-cornucopia}

Recall that section~\ref{sec:burgers} produced an infinite family of Lax
representations for inviscid Burgers via the Calogero--Nucci hodograph
mechanism. In the present case, a SILO sweep on a hodograph ansatz would, by same token, surface a handful of low-degree members of
the family one at a time and leave the structural pattern to be
recognized afterward by hand. 
Here, we skip
the relevant numerics. 
The mechanism applies verbatim to the shallow water equations, but with a resulting family that is substantially richer than for the Hopf equation, because the shallow water equations have two Riemann invariants instead of just one.
Note that in this subsection we set $a = +1$ for definiteness.
The case $a = -1$ only requires a sign tracking throughout.

\paragraph{Riemann coordinates and the compatibility equation.}

The Riemann invariants of \eqref{eq:SW-pde} are
\begin{equation}\label{eq:SW-Riemann}
r_+ = u + 2 \sqrt{\rho}, \qquad r_- = u - 2 \sqrt{\rho},
\end{equation}
and a direct calculation gives the Riemann form
\begin{equation}\label{eq:SW-Riemann-form}
(r_\pm)_t + c_\pm\, (r_\pm)_x = 0, \qquad c_\pm = u \pm \sqrt{\rho} = \frac{3 r_\pm + r_\mp}{4},
\end{equation}
two coupled Hopf-like equations whose characteristic speeds depend on
\emph{both} invariants. Unlike scalar Hopf, therefore, no single-variable function $f(r_+)$ alone is conserved.  Thus, the analog of the fractional-power trick of
Section~\ref{sec:burgers} requires a joint density $F(r_+, r_-)$.

A Calogero--Nucci hodograph pair takes the form $S_x = F(u, \rho)$,
$S_t = G(u, \rho)$, with compatibility $S_{xt} = S_{tx}$ equivalent on
solutions of \eqref{eq:SW-pde} to the conservation law
\begin{equation}\label{eq:SW-cons-law}
\partial_t F = \partial_x G.
\end{equation}
Working in Riemann coordinates and writing $F = F(r_+, r_-)$, $G = G(r_+, r_-)$, the chain rule together with the Riemann form $(r_\pm)_t = -c_\pm (r_\pm)_x$ gives
\begin{align*}
\partial_t F &= F_{r_+} (r_+)_t + F_{r_-} (r_-)_t = -c_+ F_{r_+} (r_+)_x - c_- F_{r_-} (r_-)_x, \\
\partial_x G &= G_{r_+} (r_+)_x + G_{r_-} (r_-)_x.
\end{align*}
Matching coefficients of $(r_+)_x$ and $(r_-)_x$ in $\partial_t F = \partial_x G$ forces the flux to satisfy
\begin{equation}\label{eq:SW-flux}
G_{r_+} = -c_+ F_{r_+}, \qquad G_{r_-} = -c_- F_{r_-}.
\end{equation}
Differentiating these, and using $(c_+)_{r_-} = (c_-)_{r_+} = 1/4$ from $c_\pm = (3 r_\pm + r_\mp)/4$,
\[
G_{r_+ r_-} = -\tfrac{1}{4} F_{r_+} - c_+ F_{r_+ r_-}, \qquad G_{r_- r_+} = -\tfrac{1}{4} F_{r_-} - c_- F_{r_+ r_-}.
\]
Setting $G_{r_+ r_-} = G_{r_- r_+}$ and using $c_+ - c_- = (r_+ - r_-)/2$ reduces to the linear second-order PDE
\begin{equation}\label{eq:EPD}
\tfrac12(r_+ - r_-)\, F_{r_+ r_-} + \tfrac{1}{4}\bigl( F_{r_+} - F_{r_-} \bigr) = 0.
\end{equation}

This is a classical Euler--Poisson--Darboux (EPD) equation
\cite{Whitham1974, CourantHilbert}, the same
equation that organizes the Carrier--Greenspan transformation of long-wave
runup \cite{CarrierGreenspan1958} and the hydrodynamic-type integrability
theory of Tsarev \cite{Tsarev1991}, and that arises in
\cite{KonopelchenkoMartinezMedina2010} as the equation governing
hodograph critical points of the dispersionless coupled-KdV hierarchies. The solution space of \eqref{eq:EPD} is infinite-dimensional, and \emph{every} solution
gives a Lax pair for shallow water through \eqref{eq:SW-flux} and the
hodograph identification $S_x = F$, $S_t = G$. 
Here we exhibit three families explicitly.

\paragraph{The polynomial Whitham hierarchy.}

Restricting \eqref{eq:EPD} to polynomial $F(r_+, r_-)$ and solving by total degree, the basis grows by exactly one new direction per degree. The reason is dimensional since the space of homogeneous polynomials of degree $d$ in $(r_+, r_-)$ is $(d+1)$-dimensional, while EPD applied to a homogeneous degree-$d$ polynomial returns a homogeneous expression of degree $d-1$ whose vanishing imposes $d$ linear constraints, one for each monomial $r_+^i r_-^{d-1-i}$ with $i = 0, \ldots, d-1$. This leaves one free direction at each degree.

Degree $\leq 1$ is trivial: $F = a + b(r_+ + r_-)$ has $F_{r_+ r_-} = 0$ and $F_{r_+} - F_{r_-} = 0$, so EPD reduces to $0 = 0$ for any constants $a, b$. At degree $2$, write $F = \alpha r_+^2 + \beta r_+ r_- + \gamma r_-^2$. Then $F_{r_+ r_-} = \beta$ and $F_{r_+} - F_{r_-} = (2\alpha - \beta) r_+ + (\beta - 2\gamma) r_-$. Substituting into EPD,
\[
\tfrac{1}{2}\beta(r_+ - r_-) + \tfrac{1}{4}\bigl[(2\alpha - \beta) r_+ + (\beta - 2\gamma) r_-\bigr] = 0,
\]
and matching coefficients of $r_+$ and $r_-$ gives $\beta = -2\alpha$ and $\beta = -2\gamma$, so $\alpha = \gamma$. Setting $\alpha = 1$, the new direction is $F_2 = r_+^2 - 2 r_+ r_- + r_-^2 = (r_+ - r_-)^2$. The degree-$3$ calculation is analogous and forces the homogeneous cubic part to be proportional to $(r_+ + r_-)(r_+ - r_-)^2$. The pattern continues, giving a basis $\{F_0, F_1, F_2, \ldots\}$ of polynomial EPD solutions,
\[
F_0 = 1,\quad F_1 = r_+ + r_-,\quad F_2 = (r_+ - r_-)^2,\quad F_3 = (r_+ + r_-)(r_+ - r_-)^2,\quad \ldots,
\]
where the subspace of solutions of total degree at most $d$ has dimension $d + 1$.

Translating back to the original physical variables $(u, \rho)$ via $r_+ + r_- = 2u$ and $(r_+ - r_-)^2 = 16\rho$, the first six generators are, up to overall normalization,
\begin{equation}\label{eq:Whitham-generators}
1,\quad u,\quad \rho,\quad u\rho,\quad \rho^2 + u^2\rho,\quad u\rho(3\rho + u^2),\quad \ldots.
\end{equation}
We recognize this pattern as the well-known Kupershmidt--Manin / Whitham hierarchy of conserved densities for shallow water \cite{KupershmidtManin1977, Whitham1974}.

For each density $F$, the matching flux $G$ is determined directly by the conservation law $\partial_t F = \partial_x G$ on shallow-water solutions. Two examples make this concrete. The mass density $F = \rho$ uses $\rho_t = -(u\rho)_x$ from \eqref{eq:SW-pde}:
\[
\partial_t \rho = -(u\rho)_x = \partial_x(-u\rho), \qquad G = -u\rho,
\]
giving the hodograph pair $(S_x, S_t) = (\rho, -u\rho)$. The momentum-energy density $F = u\rho$ uses both PDEs and the product rule:
\begin{align*}
\partial_t(u\rho) &= u_t\,\rho + u\,\rho_t 
    = (-uu_x - \rho_x)\rho - u(u\rho)_x 
    = -2u\rho\, u_x - \rho\rho_x - u^2\rho_x 
    = -\partial_x\!\bigl(u^2\rho + \tfrac{1}{2}\rho^2\bigr),
\end{align*}
giving the hodograph pair $(S_x, S_t) = (u\rho, -u^2\rho - \tfrac{1}{2}\rho^2)$.

None of these is a Laurent function in any spectral parameter $\lambda$; they are polynomial in the field variables, and they are invisible to the Brunelli--Das presentation because that presentation parameterizes the spectral curve $L = \lambda$, not the conserved densities directly.

\paragraph{The algebraic continuum via the spectral discriminant.}

A second family of EPD solutions comes from the algebraic ansatz
\[
F(r_+, r_-; \lambda) = (r_+ - \lambda)^a (r_- - \lambda)^b
\]
with constants $a, b \in \R$ and spectral parameter $\lambda \in \C$. The derivatives are
\[
F_{r_+ r_-} = ab(r_+ - \lambda)^{a-1}(r_- - \lambda)^{b-1}, \qquad F_{r_+} - F_{r_-} = a(r_+ - \lambda)^{a-1}(r_- - \lambda)^b - b(r_+ - \lambda)^a (r_- - \lambda)^{b-1}.
\]
Substituting into EPD and factoring out $(r_+ - \lambda)^{a-1}(r_- - \lambda)^{b-1}$ leaves
\[
\tfrac12{ab}(r_+ - r_-) + \tfrac{1}{4}\bigl[a(r_- - \lambda) - b(r_+ - \lambda)\bigr] = 0,
\]
which expanded becomes $b(2a-1)\,r_+ + a(1-2b)\,r_- + (b - a)\lambda = 0$. Matching coefficients of $r_+, r_-, \lambda$ separately forces $a = b = 1/2$ as the unique nontrivial product solution. Hence
\begin{equation}\label{eq:F-cont}
F_\lambda(r_+, r_-) = \sqrt{(\lambda - r_+)(\lambda - r_-)}.
\end{equation}

To find the matching flux $G_\lambda$, use the relations \eqref{eq:SW-flux}:
\[
G_{r_+} = -c_+ F_{\lambda, r_+} 
    = \frac{(3r_+ + r_-)(\lambda - r_-)}{8 F_\lambda}.
\]
Take the ansatz $G_\lambda = h(r_+, r_-, \lambda)\, F_\lambda$. Then $G_{r_+} = h_{r_+} F_\lambda + h F_{\lambda, r_+}$. Multiplying both sides of the constraint by $F_\lambda$, using $F_\lambda^2 = (\lambda - r_+)(\lambda - r_-)$, and dividing through by $(\lambda - r_-)$ gives the reduced equation
\[
h_{r_+}(\lambda - r_+) = \tfrac18(3r_+ + r_-) + \tfrac12{h},
\]
with the symmetric companion $h_{r_-}(\lambda - r_-) = \tfrac18(r_+ + 3r_-) + \tfrac12{h}$ from the $G_{r_-}$ constraint. Trying $h = \alpha r_+ + \beta r_- + \gamma \lambda$ and matching coefficients of $r_+, r_-, \lambda$ on both sides forces $\alpha = \beta = -1/4$ and $\gamma = -1/2$, giving
\begin{equation}\label{eq:G-cont}
G_\lambda(r_+, r_-) = -\tfrac{1}{4}(r_+ + r_- + 2\lambda)\,\sqrt{(\lambda - r_+)(\lambda - r_-)}.
\end{equation}
In $(u, \rho)$ coordinates, $(\lambda - r_+)(\lambda - r_-) = (\lambda - u)^2 - 4\rho$ and $r_+ + r_- + 2\lambda = 2(u + \lambda)$, so
\begin{equation}\label{eq:F-G-cont-uR}
F_\lambda(u, \rho) = \sqrt{(\lambda - u)^2 - 4\rho}, \qquad G_\lambda(u, \rho) = -\tfrac{1}{2}(\lambda + u)\,\sqrt{(\lambda - u)^2 - 4\rho}.
\end{equation}

The expression $F_\lambda^2 = (\lambda - u)^2 - 4\rho$ is the discriminant of the Brunelli--Das spectral curve \eqref{eq:SW-quad}. Solving $L = \lambda$ for $p$,
\begin{equation}\label{eq:p-branches}
p_\pm(u, \rho; \lambda) = \tfrac{1}{2}\bigl[(\lambda - u) \pm \sqrt{(\lambda - u)^2 - 4\rho}\bigr] = \tfrac{1}{2}\bigl[(\lambda - u) \pm F_\lambda\bigr],
\end{equation}
so $p_+ - p_- = F_\lambda$. The algebraic continuum \eqref{eq:F-G-cont-uR} is therefore the Brunelli--Das pair recast in hodograph coordinates rather than a genuinely new Lax representation. What is gained is that $F_\lambda$ is irrational in $p$ and lives outside any Laurent layer in $p$. This is the shallow-water analogue of the observation in Section~\ref{sec:burgers} that the $S_x^m = au + \lambda$ family of inviscid Burgers is generated by fractional-power conservation laws and is invisible to a polynomial-in-$S_x$ search.

\paragraph{The dressing transform.}

The two families above are connected by a third construction. The EPD equation is linear, so any superposition of solutions is again a solution. For any contour $C \subset \C$ and any measure $m(\lambda)\, d\lambda$ for which the integrals below converge,
\begin{equation}\label{eq:dressing-F}
F_m(u, \rho) = \int_C m(\lambda)\, F_\lambda(u, \rho)\, d\lambda, \qquad G_m(u, \rho) = \int_C m(\lambda)\, G_\lambda(u, \rho)\, d\lambda
\end{equation}
solve the EPD equation and the conservation law $\partial_t F_m = \partial_x G_m$ respectively, by (formally) passing the $(x, t)$-derivatives through the $\lambda$-integral. The hodograph pair $S_x = F_m$, $S_t = G_m$ is therefore a valid Lax representation for every choice of $(C, m)$. Two specializations make this concrete.

If $m$ is a finite sum of point masses, $m(\lambda) = \sum_{j=1}^N \alpha_j\, \delta(\lambda - \lambda_j)$, then $F_m = \sum_{j=1}^N \alpha_j F_{\lambda_j}$ is a linear combination of the algebraic continuum at $N$ chosen spectral points. For $N = 2$ with $\alpha_1 = -\alpha_2 = 1$ and distinct $\lambda_1, \lambda_2$,
\[
F(u, \rho) = \sqrt{(\lambda_1 - u)^2 - 4\rho} - \sqrt{(\lambda_2 - u)^2 - 4\rho}
\]
is irrational in $(u, \rho)$ and inherits its content from two distinct points on the spectral curve. It is a perfectly valid hodograph Lax representation that no Laurent-in-$p$ ansatz could capture.

The more interesting specialization extracts Laurent coefficients of $F_\lambda$ at $\lambda = \infty$. Choose $C$ to be a counterclockwise circle of radius $R \to \infty$ enclosing the branch cut of $F_\lambda$, and take $m(\lambda) = \lambda^{-n-2}/(2\pi i)$ for $n \geq 0$. For $|\lambda| > |u| + 2\sqrt{\rho}$ the radicand is positive, and
\[
F_\lambda(u, \rho) = \lambda \sqrt{1 - \frac{2u}{\lambda} + \frac{u^2 - 4\rho}{\lambda^2}}.
\]
Applying the binomial series $\sqrt{1 + w} = 1 + w/2 - w^2/8 + w^3/16 - \cdots$ with $w = -2u/\lambda + (u^2 - 4\rho)/\lambda^2$ and collecting powers of $\lambda$ produces
\begin{equation}\label{eq:F-asymptotic}
F_\lambda(u, \rho) = \sum_{n \geq 0} \mathcal{F}_n(u, \rho)\, \lambda^{1-n}.
\end{equation}
The first three coefficients come straight from bookkeeping: at order $\lambda^1$ the leading $1$ inside the root gives $\mathcal{F}_0 = 1$; at order $\lambda^0$ only $w/2$ contributes, giving $\mathcal{F}_1 = -u$; at order $\lambda^{-1}$ the term $(u^2 - 4\rho)/(2\lambda^2)$ from $w/2$ combines with $-u^2/(2\lambda^2)$ from $-w^2/8$ to give $\mathcal{F}_2 = -2\rho$. Higher orders continue mechanically. Similarly,
\begin{equation}\label{eq:G-asymptotic}
G_\lambda(u, \rho) = -\tfrac{1}{2}(\lambda + u)\, F_\lambda(u, \rho) = \sum_{n \geq 0} \mathcal{G}_n(u, \rho)\, \lambda^{2-n}.
\end{equation}
The Cauchy residue theorem identifies these Laurent coefficients as contour integrals,
\[
\mathcal{F}_n = \frac{1}{2\pi i}\oint_{|\lambda| = R} \lambda^{n-2}\, F_\lambda\, d\lambda,
\qquad
\mathcal{G}_n = \frac{1}{2\pi i}\oint_{|\lambda| = R} \lambda^{n-3}\, G_\lambda\, d\lambda,
\]
so each $\mathcal{F}_n, \mathcal{G}_n$ is the dressing transform \eqref{eq:dressing-F} of the corresponding power-law measure. The first six pairs are
\begin{equation}\label{eq:F-G-table}
\renewcommand{\arraystretch}{1.15}
\begin{array}{c|cccccc}
n & 0 & 1 & 2 & 3 & 4 & 5 \\
\hline
\mathcal{F}_n & 1 & -u & -2\rho & -2u\rho & -2\rho(\rho + u^2) & -2u\rho(3\rho + u^2), \\
\mathcal{G}_n & -\tfrac{1}{2} & 0 & \rho + \tfrac{1}{2}u^2 & 2u\rho & \rho(\rho + 2u^2) & 2u\rho(2\rho + u^2).
\end{array}
\end{equation}

Substituting \eqref{eq:F-asymptotic} and \eqref{eq:G-asymptotic} into $\partial_t F_\lambda = \partial_x G_\lambda$ and matching coefficients of $\lambda^k$ yields a tower of conservation laws on the shallow-water flow. The factor of $\lambda$ in $G_\lambda = -\tfrac{1}{2}(\lambda + u) F_\lambda$ offsets the Laurent indexing by one: the term in $F_\lambda$ at power $\lambda^k$ is $\mathcal{F}_{1-k}$, the term in $G_\lambda$ at the same power is $\mathcal{G}_{2-k}$, and density $\mathcal{F}_n$ pairs with flux $\mathcal{G}_{n+1}$:
\begin{equation}\label{eq:Whitham-pairing}
\partial_t \mathcal{F}_n = \partial_x \mathcal{G}_{n+1}, \qquad n \geq 0.
\end{equation}
The first three nontrivial cases, read off from \eqref{eq:F-G-table}, are
\begin{align*}
n = 1: &\quad \partial_t(-u) = \partial_x\bigl(\rho + \tfrac{1}{2}u^2\bigr) &\text{(momentum)},\\
n = 2: &\quad \partial_t(-2\rho) = \partial_x(2u\rho) &\text{(mass)},\\
n = 3: &\quad \partial_t(-2u\rho) = \partial_x\bigl(\rho(\rho + 2u^2)\bigr) &\text{(energy)}.
\end{align*}
These are the classical momentum, mass, and energy conservation laws of shallow water, three consecutive rungs in an infinite tower generated by the single algebraic function $F_\lambda = \sqrt{(\lambda - u)^2 - 4\rho}$. The remainder of the tower is the Kupershmidt--Manin / Whitham hierarchy \eqref{eq:Whitham-generators}, up to overall sign and normalization inherited from the $F_\lambda$ ansatz. The hodograph Lax pair at level $n$ is $S_x = \mathcal{F}_n$, $S_t = \mathcal{G}_{n+1}$.

This section shows that each measure on $\C \cup \{\infty\}$ defines a hodograph Lax pair through \eqref{eq:dressing-F}, so shallow water has an infinite-dimensional family of Lax representations. Power-law measures at $\infty$ produce the Whitham hierarchy and the classical conservation laws via Cauchy residues. Point masses at finite $\lambda$ produce the algebraic continuum $F_\lambda$ and combinations $\sum_j \alpha_j F_{\lambda_j}$. Smooth densities on a contour interpolate between these extremes. Thus, the ``Lax pairs galore'' phenomenon of Calogero and Nucci \cite{CalogeroNucci} appears here as a function space of representations parameterized by measures on $\C \cup \{\infty\}$.

\section{A First-Order Lax Operator for the KdV equation}
\label{sec:kdv}

The four case studies so far have exhibited Lax pairs that are anomalous in various limited senses: pairs that are spectrally collapsed, infinite-parametric, or arranged into a function space of representations
parameterized by a measure.  This section presents a case where SILO returned something substantially farther from the classical picture: a \emph{first-order} scalar Lax operator for the KdV equation, living alongside the classical second-order Schr\"odinger operator in the Lax landscape of KdV. This first-order pair was first reported in~\cite{SILO}, and numerical details of that study can be found there. Here, we study some of its properties.

The first-order pair is interesting for two distinct reasons.  Diagnostically, it is by every classical measure too simple to carry integrable structure, since, as we will show, its $L^2$ point spectrum is empty, its periodic eigenvalues are uniformly spaced and encode only the mean of~$u$, and its monodromy reduces to the mass. Despite this, the full KdV conservation hierarchy is still present in the operator algebra of its powers, through a mechanism that connects the operator to the classical Bell-polynomial formalism for soliton equations \cite{LambertSpringael2008, LambertLebleSpringael2001, Fan2011} once the connection is made explicit. 
As mentioned in the introduction, the verdict that such conservation-law pairs are spectrally inert is itself longstanding, going back to the 1980s.  What the present section shows is that this inertness coexists with a full operator-algebraic recovery of the conservation hierarchy, so the standard verdict, while correct about the spectrum, is incomplete about the algebra.

\subsection{The first-order Lax pair and its fakeness}
\label{sec:fo_pair}

The KdV equation in the comoving frame reads
\begin{equation}\label{eq:kdv_std}
u_t = 6uu_x + u_{xxx}.
\end{equation}
Its classical Lax pair, due to Lax \cite{Lax1968}, is built from the self-adjoint Schr\"odinger operator $L_{\mathrm{Schr}} = -u-\partial_x^2$ and the skew-adjoint $P_{\mathrm{Schr}} = -4\partial_x^3 - 6u\partial_x - 3u_x$, with $\partial_t L = [L,P]$ equivalent to \eqref{eq:kdv_std}.  The spectral theory of~$L_{\mathrm{Schr}}$ provides the machinery for exact solution and, through the trace formula $d[\tr(L_{\mathrm{Schr}}^s)]/dt = 0$, generates the infinite hierarchy of KdV conservation laws via the Gel'fand--Dikii polynomials \cite{GelfandDikii}.

Applied to KdV with a wide operator hypothesis, SILO recovers this classical pair and, alongside it, a second pair of entirely different character, written here in an algebraically convenient form:
\begin{equation}\label{eq:fo_pair}
L = u + \partial_x, \qquad P = \kappa u + 3u^2 + u_{xx} + \kappa\,\partial_x=3u^2 + u_{xx} + \kappa L,
\end{equation}
where $\kappa\in\mathbb{R}$ is a free parameter.

\begin{theorem}[First-order Lax pair for KdV]\label{thm:fo_lax}
The pair \eqref{eq:fo_pair} satisfies $\partial_t L = [L,P]$ on solutions of \eqref{eq:kdv_std}, acting on $C^3(\R)$.
\end{theorem}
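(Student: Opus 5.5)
The plan is to compute both sides of $\partial_t L = [L,P]$ as operators on $C^3(\R)$ and compare them. Since $\partial_t L$ is multiplication by $u_t$, the goal is to show that the commutator is also a pure multiplication operator, with multiplier $6uu_x+u_{xxx}$. The decomposition $P = 3u^2 + u_{xx} + \kappa L$ used in \eqref{eq:fo_pair} does most of the work. Because $[L,\kappa L]=0$, the $\kappa$-dependent part drops out at once, and this explains why $\kappa$ is a free parameter. What remains is $[L, w]$ with $w := 3u^2 + u_{xx}$ a multiplication operator.

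The key step is the commutator of a first-order operator with a multiplication operator. Take $\phi\in C^3(\R)$; any regularity at least $C^1$ suffices for this identity. Multiplication operators commute with one another, so $[u,w]=0$. For the derivative part,
\[
[\partial_x, w]\phi = (w\phi)_x - w\phi_x = w_x\phi .
\]
Hence $[L,P] = [u+\partial_x,\, w] = w_x$ as a multiplication operator, and
\[
w_x = 6uu_x + u_{xxx}.
\]
On solutions of \eqref{eq:kdv_std} this equals $u_t$, which is exactly the multiplier of $\partial_t L$. Conversely, the operator identity $\partial_t L=[L,P]$ forces $u_t = w_x$, so the identity is equivalent to KdV.

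I expect no real obstacle here. The one point that needs care is bookkeeping: check that every differential term in $[L,P]$ cancels, so that no residual $\partial_x$ term survives, and state precisely the regularity needed. Applying $P$ requires $u\in C^3$ in $x$ so that $w_x$ exists, together with differentiability of $u$ in $t$. The test functions $\phi$ only need to be $C^1$, so the stated domain $C^3(\R)$ is more than enough. I would also note in passing the gauge identity $L = e^{-v}\partial_x e^{v}$ with $v=\int u\,dx$, as a consistency check: conjugation turns the statement into $\partial_t(e^{-v}\partial_x e^v)$, which matches the commutator computation.
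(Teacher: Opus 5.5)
Your proposal is correct and follows essentially the same route as the paper: drop the $\kappa L$ tail using $[L,\kappa L]=0$, identify $[u+\partial_x,\,3u^2+u_{xx}]=[\partial_x,\,3u^2+u_{xx}]$ as multiplication by $6uu_x+u_{xxx}$, and match this with $\partial_t L = u_t$. One small correction to your regularity remark: when $\kappa\neq 0$, the term $LP\varphi$ contains $\kappa\varphi_{xx}$, so test functions must be $C^2$ rather than merely $C^1$; the stated domain $C^3(\R)$ still covers this.
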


\begin{proof}
Write $L = u + D$ with $D = \partial_x$, and $P = (3u^2 + u_{xx}) + \kappa L$. Since $[L, \kappa L] = 0$,
\[
[L, P]\varphi = [L,\, 3u^2 + u_{xx}]\varphi = (6 u u_x + u_{xxx})\varphi,
\]
and equating with $(\partial_t L)\varphi = u_t \varphi$ gives \eqref{eq:kdv_std}.
\end{proof}

Two different diagnostics show that the pair \eqref{eq:fo_pair} is fake.
The Krichever integrability test~\cite{Krichever1977} compares two
dispersion laws obtained by linearizing at a constant background $u =
u_0$: the law $\omega_{\mathrm{KdV}}(\xi)$ coming from the linearized
KdV equation, and the law $\omega_{\mathrm{Lax}}(\xi)$ coming from
the auxiliary linear system $L\varphi = \lambda\varphi$, $\varphi_t =
-P\varphi$. For a Lax pair carrying algebro-geometric content, the two
agree (up to overall normalization) and are polynomial in the wavenumber
$\xi$. For our first-order pair, $\omega_{\mathrm{KdV}}(\xi) = \xi^3 -
6u_0\xi$ is cubic, while a short calculation (Appendix~\ref{app:krichever})
gives $\omega_{\mathrm{Lax}}(\xi) = -3iu_0^2$, constant in $\xi$. The
pair satisfies Krichever's polynomiality requirement only in the
degenerate sense that a constant is a polynomial of degree zero, and
encodes none of the cubic $\xi$-structure that the KdV dispersion
carries.

Additionally, the first-order pair is considered fake according to the Butler and Hay diagnostic~\cite{ButlerHay}. Setting $S = e^U$ with $U = \int u\,dx$,
the gauge transformation $\varphi \mapsto S\varphi$ conjugates the Lax
operator to
\[
S L S^{-1} = e^U (u + \partial_x)\,e^{-U} = \partial_x,
\]
a constant-coefficient operator in which every trace of $u$ has been
removed. Butler and Hay take precisely this reducibility as a sufficient
condition for fake-ness: a Lax pair gauge-equivalent to one independent
of the dependent variable cannot have been encoding the nonlinear
dynamics, however neatly the compatibility identity is satisfied. The
same verdict follows from their second test, an excess-freedom count on
the auxiliary system, which Appendix~\ref{app:butlerhay} works out
alongside the gauge construction.

\subsection{Spectral theory of the first-order operator}
\label{sec:spectral}

\paragraph{Spectrum on $\mathbb{R}$.} Before turning to the algebraic structure that carries the KdV content, we record the spectral picture for the operator~$L$ defined in \eqref{eq:fo_pair}. To this end, let 
\be
\label{e:KdV_Udef}
U(x) := \int_0^x u(s)\,ds.
\ee
The eigenvalue problem $L\varphi = \lambda\varphi$ reads $\varphi'(x) = (\lambda - u(x))\varphi(x)$, with explicit solution 
\be
\label{eq:per_sol}
\varphi(x;\lambda) = C\exp(\lambda x - U(x))\,,
\ee
with $U(x)$ given in~\eqref{e:KdV_Udef}
and $C$ an arbitrary constant.
The following proposition might seem obvious, but it is nevertheless useful to write down the result.
\begin{proposition}\label{prop:noL2}
For localized $u \in L^1(\R)$, the operator $L = u + \partial_x$ has no $L^2(\R)$ eigenvalues for any $\lambda \in \C$.
\end{proposition}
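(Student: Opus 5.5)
The plan is to use the explicit solution \eqref{eq:per_sol} together with uniqueness for the first-order linear ODE $\varphi' = (\lambda - u)\varphi$. Any eigenfunction must be absolutely continuous, since the equation forces $\varphi' \in L^1_{\rm loc}$ when $u \in L^1$. So every solution has the form $\varphi(x;\lambda) = C\exp(\lambda x - U(x))$ with $U$ from \eqref{e:KdV_Udef}. To see this, note that $\psi := \varphi\, e^{-\lambda x + U(x)}$ is absolutely continuous with $\psi' = 0$ almost everywhere, so $\psi$ is constant. This reduces the claim to showing that $C\exp(\lambda x - U(x)) \in L^2(\R)$ forces $C = 0$.

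The key estimate is a uniform bound on $U$. Because $u \in L^1(\R)$, we have $|U(x)| \le \|u\|_{L^1}$ for all real $x$ (with $\operatorname{Re}$ taken if $u$ is complex). Hence
\[
|\varphi(x)|^2 \ge |C|^2 e^{-2\|u\|_{L^1}}\, e^{2\operatorname{Re}\lambda\, x}.
\]
We then split into cases on $\mu := \operatorname{Re}\lambda$. If $\mu > 0$, the lower bound grows exponentially as $x \to +\infty$. If $\mu < 0$, it grows as $x \to -\infty$. If $\mu = 0$, it is bounded below by a positive constant on all of $\R$. In every case $\int_\R |\varphi|^2\,dx = \infty$ unless $C = 0$, so no $\lambda \in \C$ is an $L^2$ eigenvalue.

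The only step needing care is the regularity and uniqueness one: confirming that an $L^2$ eigenfunction in the distributional or domain sense must coincide with the explicit formula. The integrating-factor argument above handles this once we note that $e^{\pm U}$ is absolutely continuous and bounded. I would state the domain as $\{\varphi \in L^2 : \varphi' + u\varphi \in L^2\}$ and run the argument in $W^{1,1}_{\rm loc}$. The remainder is elementary. It is worth remarking that only $u \in L^1$ is used, so the same proof shows that the absence of point spectrum reflects the gauge triviality $e^{U}Le^{-U} = \partial_x$ noted earlier, because $\partial_x$ has no $L^2$ eigenvalues and the conjugating factor is bounded above and below.
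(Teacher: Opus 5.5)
Your proof is correct and is essentially the paper's argument: you use the explicit solution $C\exp(\lambda x - U(x))$, the boundedness of $U$ that follows from $u\in L^1(\R)$, and the fact that $e^{(\operatorname{Re}\lambda)x}$ cannot be square-integrable at both $+\infty$ and $-\infty$. The differences are refinements only: your uniform bound $|U(x)|\le\|u\|_{L^1}$ replaces the paper's limits $U_\pm$, and you make explicit two things the paper leaves implicit, namely the $\operatorname{Re}\lambda=0$ case and the fact that every suitably defined eigenfunction must have the explicit form.
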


\begin{proof}
As $|x| \to \infty$, $U(x) \to U_\pm$, which is finite for $u\in L^1(\mathbb{R}),$ so $|\varphi(x;\lambda)| \sim |C_\pm|\,e^{(\operatorname{Re}\lambda)x}$.  Membership in $L^2(\R)$ would require $\operatorname{Re}\lambda < 0$ (to decay at $+\infty$) and $\operatorname{Re}\lambda > 0$ (to decay at $-\infty$), which is impossible to combine.  Hence no~$\lambda \in \C$ admits an $L^2$ eigenfunction.
\end{proof}

Another direct computation reinforces this degenerate spectral picture. The eigenfunction \eqref{eq:per_sol} has the normalized transfer multiplier
\[
\lim_{x \to +\infty} e^{-\lambda x}\varphi(x;\lambda)\,\Big/\,\lim_{x \to -\infty} e^{-\lambda x}\varphi(x;\lambda)
=
\exp\!\left(-\!\int_{-\infty}^{\infty}u\,dx\right),
\]
which is independent of $\lambda$ and encodes only the mass. Thus even the natural first-order analogue of scattering data contains no spectral dependence. In particular, there is no counterpart of the nontrivial reflection/transmission data that drive the Schr\"odinger inverse-scattering theory \cite{AblowitzSegur,FaddeevTakhtajan}.

\paragraph{Periodic spectrum.} 
 Without loss of generality (thanks to the scaling invariance of the KdV equation), we choose the spatial period to be $2\pi$. 
The periodic spectrum of $L$ is characterized by the following:
\begin{theorem}\label{thm:per_spec}
On the interval $[0,2\pi]$ with periodic boundary conditions, the eigenvalues of~$L$ are $\lambda_n = \bar u + i n$ for $n \in \Z$, with eigenfunctions $\varphi_n(x) = w(x)\,e^{i n x}$ where $w(x) = \exp(-\int_0^x(u - \bar u)\,ds)$.
\end{theorem}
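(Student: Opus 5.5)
The plan is to reduce the periodic eigenvalue problem to the explicit first-order ODE solution \eqref{eq:per_sol} and then impose $2\pi$-periodicity. Throughout, $u$ is taken to be $2\pi$-periodic and (say) continuous. Its mean is $\bar u := \frac{1}{2\pi}\int_0^{2\pi}u\,ds$.

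The first step is to solve the eigenvalue equation. The equation $L\varphi=\lambda\varphi$ is the scalar linear ODE $\varphi'=(\lambda-u)\varphi$. By uniqueness for linear ODEs, every solution is $\varphi(x)=C\exp(\lambda x-U(x))$, with $U$ as in \eqref{e:KdV_Udef}. In particular, the eigenspace for each $\lambda$ is at most one-dimensional.

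The second step is to impose periodicity. A nonzero solution satisfies $\varphi(2\pi)=\varphi(0)$ if and only if
\[
\exp\bigl(2\pi\lambda-U(2\pi)\bigr)=1.
\]
Since $U(2\pi)=2\pi\bar u$, this condition reads $e^{2\pi(\lambda-\bar u)}=1$, that is, $\lambda-\bar u\in i\Z$.

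Once $\varphi(2\pi)=\varphi(0)$ holds, full periodicity follows from the periodicity of $u$. Indeed, the function $\varphi(x+2\pi)$ solves the same ODE and agrees with $\varphi$ at $x=0$. Equivalently, one can check directly that $U(x+2\pi)=U(x)+2\pi\bar u$. Higher-derivative matching is then automatic from the ODE, so the domain question (for instance $H^1$ periodic functions) is harmless.

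The third step is to rewrite the eigenfunctions. For $\lambda_n=\bar u+in$,
\[
\varphi_n=\exp\bigl(inx+\bar u x-U(x)\bigr)=e^{inx}\exp\Bigl(-\int_0^x(u-\bar u)\,ds\Bigr)=w(x)e^{inx}.
\]
The factor $w$ is itself $2\pi$-periodic, because $\int_0^{2\pi}(u-\bar u)\,ds=0$.

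For completeness, I would also remark that these are all the eigenvalues. The map $\varphi\mapsto w^{-1}\varphi$ conjugates $L$ to $\bar u+\partial_x$ on periodic functions, which is the periodic analogue of the Butler--Hay gauge. The spectrum of $\partial_x$ on the circle is exactly $i\Z$, and the eigenfunctions $\{e^{inx}\}$ are complete. So the spectrum is purely discrete and equals $\{\bar u+in\}$.

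I expect no serious obstacle. The only point needing care is justifying that matching at the endpoint suffices for periodicity, and stating which function space $L$ acts on. I would handle this with the ODE uniqueness argument above.
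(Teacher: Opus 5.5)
Your proposal is correct and follows essentially the same route as the paper. You solve $\varphi'=(\lambda-u)\varphi$ explicitly, impose $\varphi(2\pi)=\varphi(0)$ to get $e^{2\pi(\lambda-\bar u)}=1$, and rewrite the eigenfunctions as $w(x)e^{inx}$; the paper factors out the periodic weight $w$ before imposing the boundary condition, whereas you use $U(2\pi)=2\pi\bar u$ directly, which is the same computation. Your extra remarks go slightly beyond what the paper's proof addresses, and they are correct: full periodicity follows from ODE uniqueness, and the conjugation $w^{-1}Lw=\bar u+\partial_x$ shows these eigenvalues exhaust the spectrum.
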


\begin{proof}
Recall that the general solution of the eigenvalue problem $L\varphi = \lambda\varphi$  is given by~\eqref{eq:per_sol} with
$C := \varphi(0)$.
To impose periodicity, separate the mean of $u$ from its fluctuation. Set
\[
\bar u := \frac{1}{2\pi}\int_0^{2\pi} u(s)\,ds, \qquad U_0(x) := \int_0^x \bigl(u(s) - \bar u\bigr)\,ds.
\]
Then $\int_0^x u(s)\,ds = \bar u\, x + U_0(x)$, and \eqref{eq:per_sol} factors as
\begin{equation}\label{eq:per_factor}
\varphi(x;\lambda) = C \exp\!\bigl((\lambda - \bar u) x\bigr)\, w(x), \qquad w(x) := e^{-U_0(x)}.
\end{equation}
The function $w$ is $2\pi$-periodic, since
\[
U_0(x + 2\pi) - U_0(x) = \int_x^{x+2\pi}\bigl(u(s) - \bar u\bigr)\,ds = \int_0^{2\pi}\bigl(u(s) - \bar u\bigr)\,ds = 0
\]
by translation invariance of the integral over a full period and the definition of $\bar u$. In particular, $w(0) = w(2\pi) = 1$.

The periodic boundary condition $\varphi(2\pi;\lambda) = \varphi(0;\lambda)$ applied to \eqref{eq:per_factor} yields
\(
C \exp\!\bigl(2\pi(\lambda - \bar u)\bigr) \cdot w(2\pi) = C \cdot w(0)
\),
which, after cancelling $C$ and using $w(2\pi) = w(0) = 1$, reduces to
\(
\exp\!\bigl(2\pi(\lambda - \bar u)\bigr) = 1
\).
The complex exponential equals $1$ precisely when its argument is an integer multiple of $2\pi i$, so
\(2\pi(\lambda - \bar u) = 2\pi i\, n\) for $n \in \Z$,
giving $\lambda_n = \bar u + i n$. Substituting $\lambda = \lambda_n$ into \eqref{eq:per_factor} and absorbing $C$ into the normalization,
\(
\varphi_n(x) = w(x)\, e^{i n x},
\)
which is the claimed eigenfunction.
\end{proof}

Thus, the eigenvalues $\lambda_n = \bar u + i n$ lie on a single vertical line in
$\C$ at uniform spacing, with real part fixed by the mean $\bar u$ and
imaginary part labelled by $n \in \Z$. Consequently, the eigenfunctions $\varphi_n(x) = w(x)\, e^{i n x}$
are just weighted Fourier modes
with weight $w(x) = e^{-U_0(x)}$. There are no bands, no gaps, no nontrivial
Floquet discriminant~\cite{MagnusWinkler}.  The Schr\"odinger-Gel'fand-Dikii route~\cite{GelfandDikii}, which extracts
the KdV hierarchy from the resolvent expansion of $L_{\mathrm{Schr}}$, has no analogue here.


\subsection{From the first-order pair to the KdV conservation hierarchy}
\label{sec:operator_algebra}

The diagnostics in Section~\ref{sec:fo_pair} declare the first-order pair
\eqref{eq:fo_pair} fake, yet the KdV conservation hierarchy is encoded in the
operator algebra generated by powers of $L$ in the ungauged frame. We set
$\kappa = 0$ in \eqref{eq:fo_pair} to drop the $\kappa L$ tail, which lies in
the centralizer of $L$ and contributes nothing, leaving
\begin{equation}\label{eq:fo_pair_clean}
(L, P) = \bigl(u + \partial_x,\ 3u^2 + u_{xx}\bigr).
\end{equation}
Here $P = 3u^2 + u_{xx}$ is the KdV flux, and the Lax equation collapses to the
conservation form of \eqref{eq:kdv_std},
\[
[L, P] = [\partial_x,\, 3u^2 + u_{xx}] = 6 u u_x + u_{xxx} = u_t,
\qquad u_t = \partial_x P.
\]

\begin{remark}
This is a special case of a general freedom in the Lax representation, noted to us by B.~Konopelchenko~\cite{KonopelchenkoPC}. The same equation is obtained from a modified pair in which $P\mapsto P + F(L)$ for any function $F$, since $[L, F(L)] = 0$, so the time operator is determined only modulo the centralizer of $L$. The $\kappa L$ tail is the linear instance $F(L) = \kappa L$. This centralizer freedom is the operator-algebra counterpart of the recurring lesson of this paper, that compatibility underdetermines the Lax representation, joining the gauge, spectral-shift, and hodograph freedoms encountered in the previous sections.
\end{remark}

The machinery that extracts the hierarchy from this pair is classical, and we
use it as such. The densities produced by $L$ are Bell polynomials
\cite{Bell1934, Comtet1974}, conserved densities are read off by the direct
method of G\"okta\c{s} and Hereman \cite{GoktasHereman1997}, which rests on the
variational fact that a density is conserved exactly when its time derivative is
a total spatial derivative \cite{Olver1993}, and the Bell-polynomial form of
this calculation is standard for soliton equations
\cite{LambertSpringael2008, GilsonLambertNimmoWillox1996, Fan2011}. What we add
is the operator that feeds the machine. The densities are manufactured by the
spectrally fake pair \eqref{eq:fo_pair_clean}, and the construction asks nothing
of the equation beyond its conservation form $u_t = \partial_x P$, so the same
device runs on any scalar evolution equation written that way. We keep the
exposition self-contained so the procedure can be run without chasing
references, but none of the underlying identities below are new.

\paragraph{Bell polynomials.}
The operator $L = u + \partial_x$ has the gauge presentation
\begin{equation}\label{eq:gauge_L}
L = e^{-U}\,\partial_x\,e^{U}, \qquad U := \int u\, dx,
\end{equation}
checked directly: $e^{-U}\partial_x(e^U f) = u f + f_x = L f$. Induction gives
$L^s = e^{-U}\partial_x^s e^U$, and acting on the constant function $1$,
\begin{equation}\label{eq:rho_gauge}
\rho_s := L^s \cdot 1 = e^{-U}\,\partial_x^s\bigl(e^U\bigr)
        = Y_s\bigl(u, u_x, \ldots, u^{(s-1)}\bigr),
\end{equation}
the $s$-th complete exponential Bell polynomial \cite{Bell1934, Comtet1974},
with first values
\begin{align*}
\rho_1 &= u, \qquad \rho_2 = u^2 + u_x, \qquad \rho_3 = u^3 + 3 u u_x + u_{xx},\\
\rho_4 &= u^4 + 6u^2 u_x + 3 u_x^2 + 4 u u_{xx} + u_{xxx},\\
\rho_5 &= u^5 + 10 u^3 u_x + 15 u u_x^2 + 10 u^2 u_{xx} + 10 u_x u_{xx} + 5 u u_{xxx} + u_{xxxx}.
\end{align*}
The complete Bell polynomials and their cumulant inverse are mutually inverse
polynomial changes of variable \cite{Comtet1974}, so
\begin{equation}\label{eq:inverse_bell}
u^{(j)} = \widehat{Y}_{j+1}(\rho_1, \ldots, \rho_{j+1}), \qquad j \geq 0,
\end{equation}
with first instances $u = \rho_1$, $u_x = \rho_2 - \rho_1^2$, and
$u_{xx} = \rho_3 - 3\rho_1 \rho_2 + 2\rho_1^3$. The generating sets
$\{u, u_x, u_{xx}, \ldots\}$ and $\{\rho_1, \rho_2, \ldots\}$ describe the same
polynomials in $u$ and its derivatives, and we move freely between them.

\paragraph{The two recursions.}
Both $\partial_t$ and $\partial_x$ act on the $\rho_s$ by explicit polynomial
rules. For the time derivative, the Lax equation $\partial_t L = [L, P]$
telescopes to $\partial_t L^s = [L^s, P]$. Applying this to $1$ gives
$\partial_t \rho_s = L^s P - P \rho_s$, and expanding
$L^s P = e^{-U}\partial_x^s(e^U P)$ by Leibniz with
$\partial_x^k e^U = \rho_k e^U$ leaves, after the $j = 0$ term cancels
$P\rho_s$,
\begin{equation}\label{eq:dt_rho}
\boxed{\;\partial_t \rho_s
   = \sum_{j=1}^{s} \binom{s}{j}\, \rho_{s-j}\, P^{(j)},
   \qquad P^{(j)} := D_x^j P, \quad s \geq 1.\;}
\end{equation}
The flux derivatives are explicit, $P^{(0)} = 3u^2 + u_{xx}$,
$P^{(1)} = 6 u u_x + u_{xxx}$, $P^{(2)} = 6 u_x^2 + 6 u u_{xx} + u_{xxxx}$, and
so on, and with Leibniz \eqref{eq:dt_rho} propagates $\partial_t$ to every
polynomial in the $\rho_s$. For the space derivative, $L\rho_s = \rho_{s+1}$ and
$L = u + \partial_x$ give
\begin{equation}\label{eq:Drho}
(\rho_s)_x = \rho_{s+1} - \rho_1\, \rho_s,
\end{equation}
which evaluates $D_x$ inside $R_\rho := \Q[\rho_1, \rho_2, \ldots]$. This
$D_x$ is the ordinary total $x$-derivative written in the $\rho$
variables, as one sees by differentiating the generating function
$A(t) = e^{U(x+t) - U(x)}$ in $x$, so reducing modulo total $x$-derivatives
means the same thing in either generating set.

\paragraph{Reading off the hierarchy.}
A local functional $\int F\, dx$ is conserved exactly when $\partial_t F$ is a
total $x$-derivative \cite{Olver1993}, so the conserved densities are the kernel
of $\partial_t$ acting on $R_\rho$ modulo $D_x R_\rho$, and this kernel is
a finite computation level by level. Grade $R_\rho$ by subscript sum, with
$\rho_s$ in degree $s$ and $\rho_{i_1}\cdots\rho_{i_k}$ in degree
$i_1 + \cdots + i_k$; the monomials of subscript sum at most $N$ are finite in
number. Take the ansatz
\[
F = \sum_{|\alpha| \leq N} c_\alpha\, m_\alpha, \qquad
m_\alpha = \rho_{i_1}\cdots\rho_{i_k}, \quad |\alpha| = i_1 + \cdots + i_k,
\]
compute $\partial_t F$ from \eqref{eq:dt_rho} with Leibniz, and reduce modulo
$D_x R_\rho$ by \eqref{eq:Drho}, which is integration by parts. Requiring
the surviving residue monomials to cancel, $\partial_t F \equiv 0$, is a
homogeneous linear system $M c = 0$ over $\Q$, and $\ker M$ is the space of KdV
conservation densities of subscript sum at most $N$. This is the
G\"okta\c{s}--Hereman direct method \cite{GoktasHereman1997} carried out in the
Bell variables natural to $L$, in exact rational arithmetic, with no spectral
information about $L$ entering at any step. Appendix~\ref{app:KdVCQ} states the
procedure as an algorithm and runs it to recover the first five KdV conserved
quantities.

The hierarchy that comes out is the standard one, and the route to it through
the Schr\"odinger spectrum is unnecessary, as Miura, Gardner, and Kruskal showed
by other means \cite{MiuraGardnerKruskal1968}. The point we draw is the
provenance. The densities are manufactured by the pair \eqref{eq:fo_pair_clean},
which Section~\ref{sec:fo_pair} certifies as fake by the Krichever and
Butler--Hay tests, and the conservation content rides on the operator algebra of
$L$ in a fixed gauge, the structure those gauge-invariant tests are built to
discard. A pair with no spectral content still delivers the full hierarchy.

\section{Concluding Remarks}

The five case studies presented in this work  share one common  lesson: compatibility under-determines the Lax representation. 
A summary of the five case studies pursued in this paper and the SILO variant under which each pair was surfaced is given in Table~\ref{tab:silo-variants}.
Every equation studied in this work is classically integrable or linearizable. Thus, the object of study here was not integrability,  but rather the shape of the representations that compatibility allows, and 
the present study demonstrated that many possible scenarios can arise. 

The Euler top returned rank-deficient pairs that satisfy $\dot L = [L,P]$ while encoding
less than the dynamics, with skew-symmetry and a Frobenius product penalty needed
to surface a Manakov skeleton whose spectral completion is then linear algebra on
$\mathfrak{so}(3)$. The free Schr\"odinger equation showed the nondegeneracy
penalty itself choosing the pair: a determinant barrier returns the textbook UTM
pair, a Frobenius surrogate returns a spectrally empty pair carrying a free
function $M(k)$. Inviscid Burgers carried an infinite gauge freedom in its
Clairaut compatibility and an exponent continuum $S_x^m = au + \lambda$, both
instances of the Calogero--Nucci hodograph mechanism on the fractional-power
conservation laws. Shallow water widened that continuum into a function space of
hodograph pairs parametrized by measures on $\C \cup \{\infty\}$, with the
Brunelli--Das pair the unique Laurent representative and the Whitham hierarchy and
spectral-discriminant continuum lying outside the Laurent layer entirely. None of
these degeneracies is visible to the compatibility condition, which is satisfied
by all of them at once.

The first-order KdV pair $L = u + \partial_x$ is the sharpest case, and it is
worth ending on. By every gauge-invariant diagnostic it is fake. The Krichever dispersion
collapses to a constant (Appendix~\ref{app:krichever}), and both Butler--Hay tests apply (Appendix~\ref{app:butlerhay}): the gauge $S = e^U$ sends $L$ to the bare $\partial_x$, and the excess-freedom count leaves the flux $q$
unconstrained. After that gauge $\partial_x^s \cdot 1 = 0$ and the whole construction collapses,
exactly as the classification predicts. Yet in the fixed gauge $v = \int u$ the
same pair drives the classical direct method to the full KdV hierarchy, $I_1$
through $I_5$ at subscript-sum levels $1, 2, 4, 6, 8$ and onward. The two verdicts
do not contradict each other. They measure different things. The gauge-invariant
tests read the spectral content, which is empty, while the conservation content
lives in the operator algebra of $L$ in a fixed gauge, the structure those tests
are built to quotient away.

We do not propose a refined taxonomy. The Butler--Hay test is correct for what it
tests, the conservation content here is gauge-dependent, and the two simply do not
see each other. Butler and Hay anticipated the gap, writing that ``the extent to
which a fake Lax pair can be used to gain information about its associated system
is also open to debate.'' Section~\ref{sec:kdv} is one answer for the
conservation-form class: a pair declared fake on gauge-invariant grounds still
carries the full conservation ring of an arbitrary $u_t = q_x$ when read in the
gauge those grounds discard. Whether other fake pairs in the literature hide
algebraic content of this kind, and whether the Butler--Hay criteria admit a
gauge-fixed refinement separating the algebraically active fakes from the empty
ones, are questions we leave open.

The machinery doing the work is classical, as Section~\ref{sec:operator_algebra} points out. The hierarchy is manufactured by a spectrally fake pair, and the
construction needs nothing of the equation past its conservation form, so the same
device runs on any scalar $u_t = q_x$. The universality is a special case of
Calogero--Nucci, and the combination we use, the substrate $(u + \partial_x,\, q)$
in the gauge $v = \int u$, is elementary enough that we make no claim of priority.

\begin{table}[t!]
\centering
\small
\renewcommand{\arraystretch}{1.45}
\setlength{\tabcolsep}{6pt}
\begin{tabular}{|
  >{\centering\arraybackslash}m{3.2cm}|
  >{\centering\arraybackslash}m{3.2cm}|
  >{\centering\arraybackslash}m{3.8cm}|
  >{\centering\arraybackslash}m{4.4cm}|}
\hline
\textbf{System} & \textbf{Ansatz} & \textbf{Compatibility} & \textbf{Degeneracy control} \\
\hline
$I\dot{\Omega}+\Omega\times(I\Omega)=0$
  & Linear in $\Omega$, skew-symmetric
  & Lie--Poisson chain rule $\{H,L\}=[L,P]$
  & Frobenius product penalty $\prod_k \|\partial_{\Omega_k}\tilde L\|_F^2\,\|\partial_{\Omega_k}\tilde P\|_F^2$ \\
\hline
$iq_t+q_{xx}=0$
  & Affine $(\tilde f,\tilde g)$, polynomial in $k$
  & Clairaut $\mu_{xt}=\tilde f_t=\tilde g_x=\mu_{tx}$
  & $|\det(Z^\dagger Z)|^{-2}$ barrier, or Frobenius penalty $\|Z^+\|_F^2$ \\
\hline
$u_t+u u_x=0$
  & Polynomial in $(u,s)$, $s=S_x$
  & Clairaut $2s\,g_u+f_u(g_s+u)=0$
  & $\ell^\infty$ product penalty $\|\Xi\|_\infty^{-1}\|\mathcal{Z}\|_\infty^{-1}$ \\
\hline
$\begin{aligned}\rho_t+(\rho u)_x &= 0\\ u_t+u u_x+\rho_x &= 0\end{aligned}$
  & $L$ Laurent in $p$, $B=(\tfrac{1}{2}L^2)_{>0}$
  & Chain rule on $(u,\rho)$, $L_t=\{L,B\}$
  & Inverse-mean product $1/\mathbb{E}_\nu[\widetilde L_u^2\,\widetilde L_\rho^2]$ \\
\hline
$u_t-6u u_x-u_{xxx}=0$
  & $\tilde L=\xi_1 u+\xi_2\partial_x+\xi_3\partial_x^2$, $\tilde P$ general tensor
  & Hamiltonian chain rule $\partial_t\tilde L=(\partial_u\tilde L)\,Q\,\tfrac{\delta H}{\delta u}=[\tilde L,\tilde P]$
  & Inverse-norm penalty $\|(\partial_u\tilde L)\,Q(\delta H/\delta u)\|_2^{-2}$ \\
\hline
\end{tabular}
\caption{A summary of the five case studies pursued throughout this paper (top to bottom: Euler top, free Schr\"odinger, inviscid Burgers, shallow water, first-order KdV) and the SILO variant under which each pair was surfaced. Degeneracy control takes a different form in every case, and each variant reflects the structural geometry of the underlying problem. The first-order KdV formulation and numerical details are from \cite{SILO} and not rerun here.}
\label{tab:silo-variants}
\end{table}

The broader point returns to where the paper started. Anomalous Lax pairs are not
pathologies at the edge of the theory. They are regular inhabitants of the Lax
landscape that any discovery procedure, sparse regression or otherwise, will keep
returning, because compatibility alone cannot tell a spectrally informative pair
from a degenerate or fake one. The burden of telling them apart falls on the
search, on the ansatz, the gauge, and the penalty geometry, not on the
compatibility condition, which certifies all of them equally.
As data-driven methods providing such certification continue to develop,
it will be important to continue expanding the relevant detailed pair
characterization and whether the latter can be used to extract the 
infinite hierarchy of conservation laws (of the integrable system) or/and
whether it can be leveraged to develop systematically the inverse scattering
analysis of the system at hand.
A complementary question is whether these fake pairs, beyond carrying the conservation hierarchy, can be used to constructively recover the genuine spectral pair of the underlying equation. For scalar conservation laws of Gelfand--Dikii type, this turns out to be possible. The first-order operator supplies the differential skeleton through its antisymmetric part $\partial_x = \tfrac12(L - L^\dagger)$, its conserved-quantity gradients supply the field content, and the Krichever test certifies the result. We will develop this constructive converse, which is the closest scalar analogue of the Manakov completion of Section~\ref{sec:euler}, in forthcoming work.

\section*{Acknowledgements}

J.A. gratefully acknowledges helpful discussions during the early stages of this work with Roy H. Goodman and Luis García-Naranjo about integrable rigid body dynamics and Manakov completions. J.A. also thanks Ildar Gabitov for pointing out the Krichever test and Oleksandr Minakov and Ibrahim Fatkullin for separate discussions concerning the spectral theory of the first order KdV Lax pair.
The authors thank Boris Konopelchenko for a careful reading of an earlier draft and for the historical and structural comments that sharpened Sections~\ref{sec:kdv} and the discussion of representation freedom.
 J.A. acknowledges support from NSF grant PHY-2316622.
W.Z.\ acknowledges support from the National Science Foundation under awards DMS-2502900 and DMS-2540370, and from the Air Force Office of Scientific Research under Grant No. FA9550-25-1-0079.
G.B.\ acknowledges support from the Simons foundation under grant number SFI-MPS-TSM-00013369.
This material is based upon work supported by the National Science Foundation under Grant No.
PHY-2408988 (PGK). 
This research was partly conducted while P.G.K. was  visiting the Okinawa Institute of Science and
Technology (OIST) through the Theoretical Sciences Visiting Program (TSVP), the University of
Sydney through the visitor program of the Sydney Mathematical Research Institute (SMRI) and the Department of Mechanical Engineering at Seoul National University through a Fulbright Fellowship. Their support is gratefully acknowledged.
This work was also  supported by a grant from the Simons Foundation [SFI-MPS-SFM-00011048, P.G.K]. 
The authors collectively thank the Banff International Research Station (BIRS) at the Instituto de Matemáticas de la Universidad de Granada and the Institute for Computational and Experimental Mathematics (ICERM) where portions of this work were carried out during week long research workshops.

\bibliographystyle{abbrv}
\bibliography{refs}

@article{Lax1968,
  author  = {Lax, Peter D.},
  title   = {Integrals of nonlinear equations of evolution and solitary waves},
  journal = {Communications on Pure and Applied Mathematics},
  volume  = {21},
  number  = {5},
  pages   = {467--490},
  year    = {1968}
}

@article{Gardner1967,
  author  = {Gardner, Clifford S. and Greene, John M. and Kruskal, Martin D. and Miura, Robert M.},
  title   = {Method for solving the {K}orteweg--de {V}ries equation},
  journal = {Physical Review Letters},
  volume  = {19},
  pages   = {1095--1097},
  year    = {1967}
}

@book{AblowitzSegur,
  author    = {Ablowitz, Mark J. and Segur, Harvey},
  title     = {Solitons and the Inverse Scattering Transform},
  publisher = {SIAM},
  address   = {Philadelphia},
  year      = {1981}
}

@article{ZakharovShabat,
  author  = {Zakharov, V. E. and Shabat, A. B.},
  title   = {Exact theory of two-dimensional self-focusing and one-dimensional self-modulation of waves in nonlinear media},
  journal = {Soviet Physics JETP},
  volume  = {34},
  number  = {1},
  pages   = {62--69},
  year    = {1972}
}

@article{CalogeroNucci,
  author  = {Calogero, F. and Nucci, M. C.},
  title   = {{L}ax pairs galore},
  journal = {Journal of Mathematical Physics},
  volume  = {32},
  pages   = {72--74},
  year    = {1991}
}

@unpublished{ButlerHay,
  author        = {Butler, Samuel and Hay, Mike},
  title         = {Simple identification of fake {L}ax pairs},
  note          = {arXiv preprint},
  eprint        = {1311.2406},
  archivePrefix = {arXiv},
  primaryClass  = {nlin.SI},
  year          = {2013}
}

@article{Gubbiotti2016,
  author  = {Gubbiotti, G. and Scimiterna, C. and Levi, D.},
  title   = {Linearizability and a fake Lax pair for a nonlinear nonautonomous quad-graph equation consistent around the cube},
  journal = {Theoretical and Mathematical Physics},
  volume  = {189},
  pages   = {1459--1471},
  year    = {2016},
  doi     = {10.1134/S0040577916100068},
  url     = {https://doi.org/10.1134/S0040577916100068}
}

@Article{axioms13020121,
AUTHOR = {Burde, Georgy I.},
TITLE = {Lax Pairs for the Modified KdV Equation},
JOURNAL = {Axioms},
VOLUME = {13},
YEAR = {2024},
NUMBER = {2},
ARTICLE-NUMBER = {121},
URL = {https://www.mdpi.com/2075-1680/13/2/121},
ISSN = {2075-1680},
DOI = {10.3390/axioms13020121}
}

@article{SILO,
  author  = {Adriazola, Jimmie and Zhu, Wei and Kevrekidis, Panayotis G. and Aceves, Alejandro},
  title   = {Computer assisted discovery of integrability via {SILO}: sparse identification of {L}ax operators},
  journal = {SIAM Journal on Applied Dynamical Systems},
  volume  = {25},
  number  = {1},
  pages   = {131--159},
  year    = {2026}
}

@article{Manakov1976,
  author  = {Manakov, S. V.},
  title   = {Note on the integration of {E}uler's equations of the dynamics of an $n$-dimensional rigid body},
  journal = {Functional Analysis and Its Applications},
  volume  = {10},
  number  = {4},
  pages   = {328--329},
  year    = {1976}
}

@book{bolsinovfomenko2004,
  author    = {Bolsinov, Alexey V. and Fomenko, Anatoly T.},
  title     = {Integrable {H}amiltonian Systems: Geometry, Topology, Classification},
  publisher = {Chapman \& Hall/CRC},
  address   = {Boca Raton},
  year      = {2004}
}

@book{kozlov1993,
  author    = {Kozlov, Valery V.},
  title     = {Symmetries, Topology and Resonances in {H}amiltonian Mechanics},
  series    = {Ergebnisse der Mathematik und ihrer Grenzgebiete},
  publisher = {Springer},
  address   = {Berlin},
  year      = {1993}
}

@article{reyman1980,
  author  = {Reyman, A. G.},
  title   = {Group-theoretical methods in the theory of finite-dimensional integrable systems},
  journal = {Russian Mathematical Surveys},
  volume  = {35},
  number  = {1},
  pages   = {57--91},
  year    = {1980}
}

@article{Fokas1997,
  author  = {Fokas, A. S.},
  title   = {A unified transform method for solving linear and certain nonlinear {PDE}s},
  journal = {Proceedings of the Royal Society A},
  volume  = {453},
  number  = {1962},
  pages   = {1411--1443},
  year    = {1997}
}

@book{Fokas2008,
  author    = {Fokas, A. S.},
  title     = {A Unified Approach to Boundary Value Problems},
  publisher = {SIAM},
  year      = {2008}
}

@article{GelfandDikii,
  author  = {Gel'fand, I. M. and Dikii, L. A.},
  title   = {Asymptotic behaviour of the resolvent of {S}turm--{L}iouville equations and the algebra of the {K}orteweg--de {V}ries equations},
  journal = {Russian Mathematical Surveys},
  volume  = {30},
  pages   = {77--113},
  year    = {1975}
}

@article{Bell1934,
  author  = {Bell, E. T.},
  title   = {Exponential polynomials},
  journal = {Annals of Mathematics},
  volume  = {35},
  number  = {2},
  pages   = {258--277},
  year    = {1934},
  doi     = {10.2307/1968431}
}

@book{Comtet1974,
  author    = {Comtet, Louis},
  title     = {Advanced Combinatorics: The Art of Finite and Infinite Expansions},
  publisher = {D.~Reidel},
  address   = {Dordrecht},
  year      = {1974}
}

@article{Miura1968,
  author  = {Miura, Robert M.},
  title   = {{K}orteweg--de {V}ries equation and generalizations. {I}.
             {A} remarkable explicit nonlinear transformation},
  journal = {Journal of Mathematical Physics},
  volume  = {9},
  number  = {8},
  pages   = {1202--1204},
  year    = {1968},
  doi     = {10.1063/1.1664700}
}

@article{MiuraGardnerKruskal1968,
  author  = {Miura, Robert M. and Gardner, Clifford S. and Kruskal, Martin D.},
  title   = {{K}orteweg--de {V}ries equation and generalizations. {II}.
             {E}xistence of conservation laws and constants of motion},
  journal = {Journal of Mathematical Physics},
  volume  = {9},
  number  = {8},
  pages   = {1204--1209},
  year    = {1968},
  doi     = {10.1063/1.1664701}
}

@article{GilsonLambertNimmoWillox1996,
  author  = {Gilson, C. and Lambert, F. and Nimmo, J. J. C. and Willox, R.},
  title   = {On the combinatorics of the {H}irota {D}-operators},
  journal = {Proceedings of the Royal Society of London A},
  volume  = {452},
  number  = {1945},
  pages   = {223--234},
  year    = {1996},
  doi     = {10.1098/rspa.1996.0013}
}

@article{LambertLebleSpringael2001,
  author  = {Lambert, F. and Lebl\'{e}, S. and Springael, J.},
  title   = {Binary {B}ell polynomials and {D}arboux covariant {L}ax pairs},
  journal = {Glasgow Mathematical Journal},
  volume  = {43A},
  pages   = {53--63},
  year    = {2001},
  doi     = {10.1017/S0017089501000088}
}

@article{LambertSpringael2008,
  author  = {Lambert, Franklin and Springael, Johan},
  title   = {Soliton equations and simple combinatorics},
  journal = {Acta Applicandae Mathematicae},
  volume  = {102},
  number  = {2--3},
  pages   = {147--178},
  year    = {2008},
  doi     = {10.1007/s10440-008-9209-3}
}

@article{Fan2011,
  author  = {Fan, Engui},
  title   = {The integrability of nonisospectral and variable-coefficient {KdV}
             equation with binary {B}ell polynomials},
  journal = {Physics Letters A},
  volume  = {375},
  number  = {3},
  pages   = {493--497},
  year    = {2011},
  doi     = {10.1016/j.physleta.2010.11.038}
}

@article{WahlquistEstabrook1975,
  author  = {Wahlquist, H. D. and Estabrook, F. B.},
  title   = {Prolongation structures of nonlinear evolution equations},
  journal = {Journal of Mathematical Physics},
  volume  = {16},
  number  = {1},
  pages   = {1--7},
  year    = {1975},
  doi     = {10.1063/1.522396}
}

@book{Dickey2003,
  author    = {Dickey, L. A.},
  title     = {Soliton Equations and Hamiltonian Systems},
  edition   = {2nd},
  series    = {Advanced Series in Mathematical Physics},
  volume    = {26},
  publisher = {World Scientific},
  address   = {Singapore},
  year      = {2003},
  doi       = {10.1142/5108}
}

@article{Krichever1977,
  author  = {I. M. Krichever},
  title   = {Methods of algebraic geometry in the theory of non-linear equations},
  journal = {Russian Mathematical Surveys},
  volume  = {32},
  number  = {6},
  pages   = {185--213},
  year    = {1977},
}

@article{MishchenkoFomenko1978,
  author  = {Mishchenko, A. S. and Fomenko, A. T.},
  title   = {Euler equations on finite-dimensional {L}ie groups},
  journal = {Mathematics of the USSR-Izvestiya},
  volume  = {12},
  number  = {2},
  pages   = {371--389},
  year    = {1978},
  doi     = {10.1070/IM1978v012n02ABEH001859}
}

@book{NMPZ1984,
  author= {Novikov, S. P. and Manakov, S. V. and Pitaevskii, L. P. and Zakharov, V. E.},
  title= {Theory of solitons: the inverse scattering transform},
  publisher= {Plenum},
  year= 1984
}

@inproceedings{Zakharov1994,
  author = {V.E. Zakharov},
  title = {Dispersionless limit of integrable systems in 2 + 1 dimensions}, 
  booktitle = {Singular Limits of Dispersive Waves}, 
  editor = {Ercolani, N. M. and Gabitov, I. R. and Levermore, C. D. and Serre, D.}, 
  publisher = {Plenum Press},
  year = 1994,
  pages = {165-174}
}

@book{FaddeevTakhtajan,
  author    = {Faddeev, L. D. and Takhtajan, L. A.},
  title     = {Hamiltonian Methods in the Theory of Solitons},
  series    = {Classics in Mathematics},
  publisher = {Springer},
  address   = {Berlin},
  year      = {1987},
  doi       = {10.1007/978-3-540-69969-9}
}

@book{MagnusWinkler,
  author    = {Magnus, Wilhelm and Winkler, Stanley},
  title     = {Hill's Equation},
  publisher = {Interscience Publishers},
  address   = {New York},
  year      = {1966},
  note      = {Reprinted by Dover Publications, 1979}
}

@article{BrunelliDas1997,
  author  = {Brunelli, J. C. and Das, A.},
  title   = {A {Lax} description for polytropic gas dynamics},
  journal = {Physics Letters A},
  volume  = {235},
  number  = {6},
  pages   = {597--602},
  year    = {1997},
  doi     = {10.1016/S0375-9601(97)00708-1},
  note    = {Preprint \texttt{solv-int/9706005}}
}

@article{Brunelli2000,
  author  = {Brunelli, J. C.},
  title   = {Dispersionless limit of integrable models},
  journal = {Brazilian Journal of Physics},
  volume  = {30},
  number  = {2},
  pages   = {455--468},
  year    = {2000},
  note    = {Preprint \texttt{nlin/0207042}}
}

@article{ConstandacheDasToppan2002,
  author  = {Constandache, A. and Das, A. and Toppan, F.},
  title   = {{Lucas} polynomials and a standard {Lax} representation for the polytropic gas dynamics},
  journal = {Letters in Mathematical Physics},
  volume  = {60},
  pages   = {197--209},
  year    = {2002}
}

@article{KonopelchenkoMartinezMedina2010,
  author  = {Konopelchenko, B. and Mart{\'\i}nez Alonso, L. and Medina, E.},
  title   = {Hodograph solutions of the dispersionless coupled {KdV} hierarchies, critical points and the {Euler--Poisson--Darboux} equation},
  journal = {Journal of Physics A: Mathematical and Theoretical},
  volume  = {43},
  number  = {43},
  pages   = {434020},
  year    = {2010},
  note    = {Preprint \texttt{arXiv:1003.2892}}
}

@book{Whitham1974,
  author    = {Whitham, G. B.},
  title     = {Linear and Nonlinear Waves},
  publisher = {Wiley-Interscience},
  address   = {New York},
  year      = {1974}
}

@book{CourantHilbert,
  author    = {Courant, R. and Hilbert, D.},
  title     = {Methods of Mathematical Physics, Vol. {II}: Partial Differential Equations},
  publisher = {Wiley-Interscience},
  address   = {New York},
  year      = {1962}
}

@article{CarrierGreenspan1958,
  author  = {Carrier, G. F. and Greenspan, H. P.},
  title   = {Water waves of finite amplitude on a sloping beach},
  journal = {Journal of Fluid Mechanics},
  volume  = {4},
  number  = {1},
  pages   = {97--109},
  year    = {1958},
  doi     = {10.1017/S0022112058000331}
}

@article{Tsarev1991,
  author  = {Ts{\"a}rev, S. P.},
  title   = {The geometry of {Hamiltonian} systems of hydrodynamic type. {The} generalized hodograph method},
  journal = {Mathematics of the {USSR}-{Izvestiya}},
  volume  = {37},
  number  = {2},
  pages   = {397--419},
  year    = {1991},
  doi     = {10.1070/IM1991v037n02ABEH002069}
}

@article{KupershmidtManin1977,
  author  = {Kupershmidt, B. A. and Manin, {Yu}. I.},
  title   = {Long-wave equations with a free surface. {I}. {Conservation} laws and solutions},
  journal = {Functional Analysis and Its Applications},
  volume  = {11},
  number  = {3},
  pages   = {188--197},
  year    = {1977},
  note    = {English translation of Funkts. Anal. Prilozhen. \textbf{11} (1977), no.~3, 31--42}
}

@unpublished{Sakovich2001,
  author        = {Sakovich, S. Yu.},
  title         = {True and fake {L}ax pairs: how to distinguish them},
  note          = {arXiv preprint},
  eprint        = {nlin/0112027},
  archivePrefix = {arXiv},
  year          = {2001}
}

@book{Olver1993,
  author    = {Olver, Peter J.},
  title     = {Applications of {L}ie Groups to Differential Equations},
  series    = {Graduate Texts in Mathematics},
  volume    = {107},
  edition   = {2},
  publisher = {Springer},
  address   = {New York},
  year      = {1993}
}

@article{KrippendorfLustSyvaeri2021,
  author  = {Krippendorf, Sven and L{\"u}st, Dieter and Syvaeri, Marc},
  title   = {Integrability ex machina},
  journal = {Fortschritte der Physik},
  volume  = {69},
  number  = {7},
  pages   = {2100057},
  year    = {2021},
  doi     = {10.1002/prop.202100057}
}

@article{LinChen2025LaxPairFIND,
  author  = {Lin, Shuning and Chen, Yong},
  title   = {{Lax-Pair-FIND}: Discovering {L}ax pair from scarce data via deep learning},
  journal = {Chaos: An Interdisciplinary Journal of Nonlinear Science},
  volume  = {35},
  number  = {11},
  pages   = {113120},
  year    = {2025},
  doi     = {10.1063/5.0278425}
}

@article{PuChen2024LPNN,
  author  = {Pu, Juncai and Chen, Yong},
  title   = {Lax pairs informed neural networks solving integrable systems},
  journal = {Journal of Computational Physics},
  volume  = {510},
  pages   = {113090},
  year    = {2024},
  doi     = {10.1016/j.jcp.2024.113090},
  note    = {Preprint \texttt{arXiv:2401.04982}}
}

@article{deKosterWahls2024,
  author  = {de Koster, Pascal and Wahls, Sander},
  title   = {Data-driven identification of the spectral operator in {AKNS} {L}ax pairs using conserved quantities},
  journal = {Wave Motion},
  volume  = {127},
  pages   = {103273},
  year    = {2024},
  doi     = {10.1016/j.wavemoti.2024.103273}
}

@article{FriedlanderVishik1990,
  author  = {Friedlander, Susan and Vishik, Misha M.},
  title   = {Lax pair formulation for the {E}uler equation},
  journal = {Physics Letters A},
  volume  = {148},
  number  = {6--7},
  pages   = {313--319},
  year    = {1990},
  doi     = {10.1016/0375-9601(90)90809-8}
}

@article{Li2001Euler2D,
  author  = {Li, Yanguang Charles},
  title   = {A {L}ax pair for the two-dimensional {E}uler equation},
  journal = {Journal of Mathematical Physics},
  volume  = {42},
  number  = {8},
  pages   = {3552--3553},
  year    = {2001},
  doi     = {10.1063/1.1378791}
}

@article{Childress3DEuler,
  author  = {Li, Yanguang Charles and Yurov, Artyom V.},
  title   = {Lax pairs and {D}arboux transformations for {E}uler equations},
  journal = {Studies in Applied Mathematics},
  volume  = {111},
  number  = {1},
  pages   = {101--113},
  year    = {2003},
  doi     = {10.1111/1467-9590.t01-1-00229}
}

@article{GerbeauLombardi2014,
  author  = {Gerbeau, Jean-Fr\'ed\'eric and Lombardi, Damiano},
  title   = {Approximated {L}ax pairs for the reduced order integration of nonlinear evolution equations},
  journal = {Journal of Computational Physics},
  volume  = {265},
  pages   = {246--269},
  year    = {2014},
  doi     = {10.1016/j.jcp.2014.01.047}
}

@article{TuritsynPrilepskyEtAl2017,
  author  = {Turitsyn, Sergei K. and Prilepsky, Jaroslaw E. and Le, Son Thai and Wahls, Sander and Frumin, Leonid L. and Kamalian, Morteza and Derevyanko, Stanislav A.},
  title   = {Nonlinear {F}ourier transform for optical data processing and transmission: advances and perspectives},
  journal = {Optica},
  volume  = {4},
  number  = {3},
  pages   = {307--322},
  year    = {2017},
  doi     = {10.1364/OPTICA.4.000307}
}

@article{Kotlyar2020NFTneural,
  author  = {Kotlyar, Oleksandr and Pankratova, Maryna and Kamalian-Kopae, Morteza and Vasylchenkova, Anastasiia and Prilepsky, Jaroslaw E. and Turitsyn, Sergei K.},
  title   = {Combining nonlinear {F}ourier transform and neural network-based processing in optical communications},
  journal = {Optics Letters},
  volume  = {45},
  number  = {13},
  pages   = {3462--3465},
  year    = {2020},
  doi     = {10.1364/OL.394115}
}

@article{Sedov2025NFTNumerics,
  title={Numerical Approaches in Nonlinear Fourier Transform-Based Signal Processing for Telecommunications},
  author={Sedov, Egor and Chekhovskoy, Igor and Fedoruk, Mikhail and Turitsyn, Sergey},
  journal={Studies in Applied Mathematics},
  volume={154},
  number={1},
  pages={e12795},
  year={2025},
  publisher={Wiley Online Library}
}

@article{TeutschBuhrenWaseda2023,
  author  = {Lee, Yu-Chen and Br{\"u}hl, Markus and Doong, Dong-Jiing and Wahls, Sander},
  title   = {Nonlinear {F}ourier classification of 663 rogue waves measured in the {P}hilippine {S}ea},
  journal = {PLOS ONE},
  volume  = {19},
  number  = {5},
  pages   = {e0301709},
  year    = {2024},
  doi     = {10.1371/journal.pone.0301709}
}

@article{Osborne2020NLFA,
  author  = {Osborne, Alfred R.},
  title   = {Nonlinear {F}ourier analysis: rogue waves in numerical modeling and data analysis},
  journal = {Journal of Marine Science and Engineering},
  volume  = {8},
  number  = {12},
  pages   = {1005},
  year    = {2020},
  doi     = {10.3390/jmse8121005}
}

@article{BridgmanHeremanQuispelvanderKamp2013,
  author  = {Bridgman, Terry and Hereman, Willy and Quispel, G. R. W. and van der Kamp, Peter H.},
  title   = {Symbolic computation of {L}ax pairs of partial difference equations using consistency around the cube},
  journal = {Foundations of Computational Mathematics},
  volume  = {13},
  number  = {4},
  pages   = {517--544},
  year    = {2013},
  doi     = {10.1007/s10208-012-9133-9}
}

@article{JinLevermoreMcLaughlin1999,
  author  = {Jin, Shan and Levermore, C. David and McLaughlin, David W.},
  title   = {The semiclassical limit of the defocusing {NLS} hierarchy},
  journal = {Communications on Pure and Applied Mathematics},
  volume  = {52},
  number  = {5},
  pages   = {613--654},
  year    = {1999},
  doi     = {10.1002/(SICI)1097-0312(199905)52:5<613::AID-CPA2>3.0.CO;2-L}
}

@article{Krichever1977FA,
  author  = {Krichever, I. M.},
  title   = {Integration of nonlinear equations by the methods of algebraic geometry},
  journal = {Functional Analysis and Its Applications},
  volume  = {11},
  number  = {1},
  pages   = {12--26},
  year    = {1977},
  doi     = {10.1007/BF01135528}
}

@article{BobenkoReymanSemenov1989,
  author  = {Bobenko, A. I. and Reyman, A. G. and Semenov-Tian-Shansky, M. A.},
  title   = {The {Kowalewski} top 99 years later: {A} {L}ax pair, generalizations and explicit solutions},
  journal = {Communications in Mathematical Physics},
  volume  = {122},
  number  = {2},
  pages   = {321--354},
  year    = {1989},
  doi     = {10.1007/BF01257419}
}

@article{GoktasHereman1997,
  author  = {G\"okta\c{s}, \"Unal and Hereman, Willy},
  title   = {Symbolic computation of conserved densities for systems of nonlinear evolution equations},
  journal = {Journal of Symbolic Computation},
  volume  = {24},
  number  = {5--6},
  pages   = {591--621},
  year    = {1997},
  doi     = {10.1006/jsco.1997.0154}
}

@book{BabelonBernardTalon2003,
  author    = {Babelon, Olivier and Bernard, Denis and Talon, Michel},
  title     = {Introduction to Classical Integrable Systems},
  series    = {Cambridge Monographs on Mathematical Physics},
  publisher = {Cambridge University Press},
  address   = {Cambridge},
  year      = {2003},
  doi       = {10.1017/CBO9780511535024}
}

@article{AKNS1974,
  author  = {Ablowitz, M. J. and Kaup, D. J. and Newell, A. C. and Segur, H.},
  title   = {The inverse scattering transform--{F}ourier analysis for nonlinear problems},
  journal = {Studies in Applied Mathematics},
  volume  = {53},
  number  = {4},
  pages   = {249--315},
  year    = {1974},
  doi     = {10.1002/sapm1974534249}
}

@article{Doikou2012,
  author  = {Doikou, Anastasia},
  title   = {Selected Topics in Classical Integrability},
  journal = {International Journal of Modern Physics A},
  volume  = {27},
  number  = {5},
  pages   = {1230003},
  year    = {2012},
  doi     = {10.1142/S0217751X12300037},
}

@book{DrazinJohnson1989,
  author    = {Drazin, P. G. and Johnson, R. S.},
  title     = {Solitons: An Introduction},
  series    = {Cambridge Texts in Applied Mathematics},
  publisher = {Cambridge University Press},
  address   = {Cambridge},
  year      = {1989},
  doi       = {10.1017/CBO9781139172059}
}

@misc{LeeWahls2025,
  author        = {Lee, Yu-Chen and Wahls, Sander},
  title         = {Field observation of soliton gases in the deep open ocean},
  note          = {arXiv preprint},
  eprint        = {2510.04662},
  archivePrefix = {arXiv},
  year          = {2025}
}

@misc{KonopelchenkoPC,
  author = {Konopelchenko, B.},
  note = {private communication},
  year = {2026}
}

\appendix
\addtocontents{section}{Appendix}
\section*{Appendix}
\def\thesection{A}

\begin{algorithm}[b!]
\caption{Conservation hierarchy for $u_t = q_x$ via the first-order Lax pair $(u + \partial_x,\, q)$.}
\label{alg:cq_generator}
\begin{algorithmic}[1]
\Require Conservation-form PDE $u_t = q_x$. Target subscript-sum level $N \geq 1$.
\Ensure A basis $\{F_\beta\}$ for the space of conservation densities at level $\leq N$.
\State Set $L \gets u + \partial_x$, \, $P \gets q$, \, $\rho_0 \gets 1$.
\For{$s = 1, \ldots, N$} \Comment{Bell polynomials}
  \State $\rho_s \gets u\,\rho_{s-1} + (\rho_{s-1})_x$.
\EndFor
\For{$j = 0, \ldots, N$} \Comment{Spatial derivatives of $P$}
  \State $P^{(j)} \gets D_x^j P$.
\EndFor
\For{$s = 1, \ldots, N$} \Comment{Time derivatives via \eqref{eq:dt_rho}}
  \State $\partial_t\rho_s \gets \sum_{j=1}^{s} \binom{s}{j}\, \rho_{s-j}\, P^{(j)}$.
\EndFor
\State Enumerate partitions $\alpha$ with $|\alpha| \leq N$; form $\rho$-monomials $m_\alpha = \rho_{i_1}\cdots\rho_{i_k}$.
\State Reduce mod $D_x$ via \eqref{eq:Drho}; pick representatives $\{m_{\alpha_1}, \ldots, m_{\alpha_n}\}$.
\For{$\alpha \in \{\alpha_1, \ldots, \alpha_n\}$}
  \State Compute $\partial_t m_\alpha$ by Leibniz from $\partial_t \rho_s$.
  \State Reduce mod $D_x$ via integration by parts to $\partial_t m_\alpha \equiv \sum_{k=1}^{m} M_{k\alpha} R_k$.
\EndFor
\State Form the $m \times n$ matrix $M = (M_{k\alpha})$ over $\Q$.
\State Compute $\ker M = \{c \in \Q^n : Mc = 0\}$.
\For{each basis vector $c^{(\beta)} \in \ker M$}
  \State $F_\beta \gets \sum_{\alpha} c_\alpha^{(\beta)}\, m_\alpha$.
\EndFor
\State \Return $\{F_\beta\}$. Each $\int F_\beta\, dx$ is conserved.
\end{algorithmic}
\end{algorithm}

\subsection{KdV CQ Algorithm}\label{app:KdVCQ}
Algorithm~\ref{alg:cq_generator} summarizes the approach of section~\ref{sec:kdv} to the conservation laws of the KdV equation. For further clarity and sake of concreteness, we run the above algorithm on the KdV equation, with $P = 3u^2 + u_{xx}$. We use the equivalence symbol $\equiv$ to denote equality modulo total $D_x$ derivatives, and we work either on a periodic domain or on $\R$ with rapid decay.

\paragraph{Level 1.} The single class $[\rho_1]$ has $\partial_t \rho_1 = P^{(1)} = (3u^2 + u_{xx})_x$, a total $D_x$. Conserved direction:
\[
I_1 = \int u\, dx.
\]

\paragraph{Level 2.} The new class is $[\rho_2]$, with $[\rho_1^2] = [\rho_2]$ via \eqref{eq:Drho}. Compute
\[
\partial_t \rho_2 = 2\rho_1 P^{(1)} + P^{(2)}.
\]
Both terms reduce: $P^{(2)} = D_x P^{(1)}$ is a total derivative, and $2\rho_1 P^{(1)} = (2uP)_x - 2 u_x P$ with $-2 u_x P = -2 u_x(3u^2 + u_{xx}) = -2(u^3)_x - (u_x^2)_x$. New conserved direction:
\[
I_2 = \int u^2\, dx.
\]

\paragraph{Level 3.} The new class $[\rho_3]$ is the unique level-3 representative (since $[\rho_3] = [\rho_1\rho_2] = [\rho_1^3]$). Direct computation of
\[
\partial_t \rho_3 = 3\rho_2 P^{(1)} + 3\rho_1 P^{(2)} + P^{(3)}
\]
and reduction by integration by parts give $\partial_t \rho_3 \equiv 3 u_x^3$, which is not a total derivative. Level 3 contributes no new direction.

\paragraph{Level 4.} New representatives mod $D_x$: $[\rho_1\rho_3]$, $[\rho_2^2]$, $[\rho_1^4]$, with $[\rho_4] = [\rho_1\rho_3]$ and $[\rho_1^2\rho_2] = [\rho_1^4]$ from \eqref{eq:Drho}. Combined with the level-3 carry-over $[\rho_3]$, the candidate space is four-dimensional. The reduced time derivatives sit in a two-dimensional residue space spanned by $u_x^3$ and $u u_x^3$:
\[
\partial_t \rho_3 \equiv 3 u_x^3, \quad \partial_t (\rho_1\rho_3) \equiv 12 u u_x^3 - 6 u_x^3, \quad \partial_t \rho_1^4 \equiv 12 u u_x^3, \quad \partial_t \rho_2^2 \equiv 12 u u_x^3 + 6 u_x^3.
\]
Only one linear combination kills both residues. The unique new kernel direction is
\[
F_3 = \rho_1^3 + \rho_1^2 \rho_2 - \tfrac{1}{2}\rho_2^2 - \tfrac{1}{2}\rho_1^4 \;\equiv\; u^3 - \tfrac{1}{2}u_x^2,
\]
giving
\be
\boxed{\;I_3 = \int\bigl(u^3 - \tfrac{1}{2}u_x^2\bigr)\, dx.\;}
\ee

\paragraph{Level 5.} The output residues from new level-5 candidates are independent of every residue seen at lower levels. The kernel does not extend.

\paragraph{Level 6.} The kernel computation, with new candidates at subscript sum~$6$ and carry-overs from level~$4$, returns one new direction:
\be
\boxed{\;I_4 = \int\bigl(u^4 - 2 u u_x^2 + \tfrac{1}{5} u_{xx}^2\bigr)\, dx,\;}
\ee
or equivalently, after clearing denominators, $5 I_4 = \int\bigl(5 u^4 - 10 u u_x^2 + u_{xx}^2\bigr)\, dx$. The translation to $\rho$-form via \eqref{eq:rho_gauge} and \eqref{eq:inverse_bell} is mechanical: $u^4 = \rho_1^4$, $u u_x^2 = \rho_1\rho_2^2 - 2\rho_1^3\rho_2 + \rho_1^5$, and $u_{xx}^2 = \rho_3^2 - 6\rho_1\rho_2\rho_3 + 4\rho_1^3\rho_3 + 9\rho_1^2\rho_2^2 - 12\rho_1^4\rho_2 + 4\rho_1^6$. Substituting and collecting yields a finite combination of integrated $\rho$-monomials at subscript sum $\leq 6$.

\paragraph{Level 7.} Same story as level 5. No new direction.

\paragraph{Level 8.} The first new direction beyond $I_1, \ldots, I_4$ appears here. The kernel computation, on the ansatz $u^5 + a_1 u^2 u_x^2 + a_2 u u_{xx}^2 + a_3 u_{xxx}^2$ with the requirement that its time derivative reduces to zero, returns the unique solution $(a_1, a_2, a_3) = (-5,\, 1,\, -1/14)$. After clearing denominators,
\be
\boxed{\;I_5 = \int\bigl(14 u^5 - 70 u^2 u_x^2 + 14 u u_{xx}^2 - u_{xxx}^2\bigr)\, dx.\;}
\ee

\paragraph{The pattern.}
At each subscript-sum level, a finite linear system over $\Q$ either extends the kernel by one direction or returns nothing. The new directions, in order, are $I_1, I_2, I_3, I_4, I_5$, appearing at levels $1, 2, 4, 6, 8$. The cross-product structure (the coefficient $-\tfrac12$ in $I_3$, the coefficients $-2$ and $\tfrac15$ in $I_4$) is the unique direction in the level-$N$ candidate space whose time derivative vanishes modulo $D_x$, identified by row-reducing a matrix of $\Q$-coefficients.

\subsection{The Butler--Hay Test for the First-Order KdV Pair}
\label{app:butlerhay}

The Butler--Hay test~\cite{ButlerHay} gives two diagnostics for fake Lax pairs.
The \emph{gauge test} seeks a transformation $\varphi \mapsto S\varphi$ conjugating
$L$ to constant coefficients in $u$; if one exists, the pair is fake, since the
gauged spatial problem could belong to any PDE on the same substrate. The
\emph{excess-freedom} ($u$-fake) test replaces each $u$-dependent coefficient by an
arbitrary function and imposes compatibility; leftover freedom signals a fake,
whereas a genuine substrate fixes $P$ rigidly. We run both on
\eqref{eq:fo_pair_clean}.

The gauge test is immediate. The presentation $L = e^{-U}\partial_x e^U$ of
\eqref{eq:gauge_L}, read backwards with $S = e^U$, gives
\[
S L S^{-1} = e^U (u + \partial_x)\,e^{-U} = \partial_x,
\]
constant-coefficient and free of $u$. The gauged eigenvalue problem
$\partial_x\tilde\varphi = \lambda\tilde\varphi$ has solutions
$\tilde\varphi = e^{\lambda x}$ carrying no information about $u$, and would govern
any PDE on the substrate $(u+\partial_x,\,q)$. The pair is fake.

The excess-freedom test agrees. Since $L = u + \partial_x$ gives $[L,P] = P_x$ for
any multiplication operator $P$, compatibility reads $u_t = P_x$ and holds for
every conservation-form PDE $u_t = q_x$ with $P = q$. The flux is unconstrained,
and KdV with $q = 3u^2 + u_{xx}$ is one member of an infinite family sharing the
substrate. This is the conservation-law dressing of Calogero and
Nucci~\cite{CalogeroNucci}, whose fake pair carries arbitrary $f, g$ with
compatibility $f_t = g_x$; the first-order pair reported in~\cite{SILO} is one
realization, and Sakovich~\cite{Sakovich2001} recast the same fact as a
gauge invariant via the cyclic basis.

By contrast, the classical pair $L_{\mathrm{Schr}} = -u - \partial_x^2$,
$P_{\mathrm{Schr}} = -4\partial_x^3 - 6u\partial_x - 3u_x$ passes both tests
substantively: conjugating a second-order operator by $S = e^{\phi}$ produces a
first-order term $-2\phi_x\partial_x$, so removing it forces $\phi_x = 0$ and
leaves $u$ intact, and $\partial_t L_{\mathrm{Schr}} = [L_{\mathrm{Schr}},
P_{\mathrm{Schr}}]$ fixes $P_{\mathrm{Schr}}$ up to gauge with no alternative flux.
The first-order pair is fake by every gauge-invariant measure here, in agreement
with the Krichever collapse of Appendix~\ref{app:krichever} and the spectral
findings of Section~\ref{sec:spectral}, yet Section~\ref{sec:operator_algebra}
recovers the full hierarchy from it. 

\subsection{The Krichever Test for the First-Order KdV Pair}
\label{app:krichever}

The Krichever test~\cite{Krichever1977, Krichever1977FA} is a necessary spectral
condition for algebro-geometric content. Linearizing $u_t = K[u]$ and its auxiliary
system $L\varphi = \lambda\varphi$, $\varphi_t = -P\varphi$ at a constant background
$u = u_0$ yields two dispersion laws, $\omega_{\mathrm{PDE}}(\xi)$ from the plane
wave $\delta u = e^{i(\xi x - \omega t)}$ and $\omega_{\mathrm{Lax}}(\xi)$ from the
temporal frequency of eigenfunctions of $L$; a genuine pair requires
$\omega_{\mathrm{Lax}}$ polynomial on the spectral curve and equal to
$\omega_{\mathrm{PDE}}$ up to normalization.

For KdV, $\delta u_t = 6u_0\,\delta u_x + \delta u_{xxx}$ gives
\begin{equation}\label{eq:kdv_dispersion}
\omega_{\mathrm{KdV}}(\xi) = \xi^3 - 6 u_0\,\xi.
\end{equation}
For the first-order pair, $L\varphi = \mu\varphi$ at $u_0$ gives
$\varphi = e^{(\mu - u_0)x}$, so $\mu = u_0 + i\xi$ and $\varphi = e^{i\xi x}$ on the
linear curve $\mu - u_0 = i\xi$. There $u_{xx} = 0$ and $P = 3u^2 + u_{xx}$
collapses to the constant $3u_0^2$, so $\varphi_t = -3u_0^2\varphi$ gives
\begin{equation}\label{eq:fo_lax_dispersion}
\omega_{\mathrm{Lax}}(\xi) = -3 i u_0^2,
\end{equation}
constant in $\xi$ (the imaginary value reflects the non-self-adjointness of $L$).
Against the cubic \eqref{eq:kdv_dispersion} this is polynomial only degenerately, as
a degree-zero constant, and recovers none of the dispersive $\xi^3$ structure of KdV
for any $u_0$.

The classical pair passes substantively. At $u_0$ the eigenfunctions of
$L_{\mathrm{Schr}} = -\partial_x^2 - u$ are $\varphi = e^{i\xi x}$ with
$\lambda = \xi^2 - u_0$, and $P_{\mathrm{Schr}}\varphi = i(4\xi^3 - 6u_0\xi)\varphi$
gives $\omega_{\mathrm{Lax}}^{\mathrm{Schr}}(\xi) = 4\xi^3 - 6u_0\xi$, cubic on the
curve and matching \eqref{eq:kdv_dispersion} up to the cubic normalization. The
first-order pair carries none of this dispersive content.
\end{document}